\newcommand{\qed}{\hfill$\Box$}
\newenvironment{proof}{\noindent {\bf Proof.}}{\qed}
\newtheorem{theorem}{Theorem}[section]
\newtheorem{lemma}{Lemma}[section]
\newtheorem{corollary}{Corollary}[section]
\newtheorem{proposition}{Proposition}[section]
\begin{document}

\newcommand{\cA}{{\cal A}}
\newcommand{\cB}{{\cal B}}
\newcommand{\cC}{{\cal C}}
\newcommand{\cZ}{{\cal Z}}
\newcommand{\cG}{{\cal G}}
\newcommand{\cN}{{\cal N}}
\newcommand{\cU}{{\cal U}}
\newcommand{\cT}{{\cal T}}
\newcommand{\cS}{{\cal S}}
\newcommand{\cP}{{\cal P}}
\newcommand{\cL}{{\cal L}}
\newcommand{\cV}{{\cal V}}
\newcommand{\cH}{{\cal H}}
\newcommand{\loc}{{\cal LOCAL}}
\newcommand{\cY}{{\cal Y}}
\newcommand{\ai}{\alpha_i}
\newcommand{\bi}{\beta_i}
\newcommand{\gi}{\gamma_i}
\newcommand{\di}{\delta_i}









\bibliographystyle{plain}


\title{Finding the Size and the Diameter of a Radio Network Using Short Labels
\footnotemark[1]}

\author{Barun Gorain\footnotemark[2]
 \and
  Andrzej Pelc\footnotemark[3]}
\date{ }
\maketitle
\def\thefootnote{\fnsymbol{footnote}}
\footnotetext[1]{A preliminary version of this paper appeared in Proc.  19th International Conference on Distributed Computing and Networking
(ICDCN 2018).}
\footnotetext[2]{
\noindent
Department of Electrical Engineering and Computer Science, Indian Institute of Technology Bhilai, India.
{\tt barun@iitbhilai.ac.in}}
\footnotetext[3]{
\noindent
D\'epartement d'informatique, Universit\'e du Qu\'ebec en Outaouais, Gatineau,
Qu\'ebec J8X 3X7, Canada, {\tt pelc@uqo.ca}. Partially supported by NSERC discovery grant 2018-03899 and by the Research Chair in Distributed Computing at the
Universit\'e du Qu\'ebec en Outaouais.}

%
%
%

\begin{abstract}
The number of nodes of a network, called its {\em size}, and the largest distance between nodes of a network, called its {\em diameter}, are among the most important network parameters. Knowing the size
and/or diameter (or a good upper bound on those parameters) is a prerequisite of many distributed network algorithms, ranging from broadcasting and gossiping, through leader election, to rendezvous and exploration.
A radio network is a collection of stations, called nodes, with wireless transmission and receiving capabilities. It is modeled as a simple connected undirected graph whose
nodes communicate in synchronous rounds. In each round, a node can either transmit a message to all its neighbors, or stay silent and listen. At the receiving end, a node $v$ hears a message from a neighbor $w$ in a given round, if $v$ listens in this round, and if $w$ is its only neighbor that transmits in this round.  If $v$ listens in a round,
and two or more neighbors of $v$ transmit in this round, a {\em collision} occurs at $v$. If $v$ transmits in a round, it does not hear anything in this round. Two scenarios are considered in the literature: if listening nodes can distinguish collision
from silence (the latter occurs when no neighbor transmits), we say that the network has the {\em collision detection} capability, otherwise there is no collision detection.

We consider the tasks of {\em size discovery} and {\em diameter discovery}: finding the size (resp. the diameter) of an unknown radio network with collision detection. All nodes have to output the size
(resp. the diameter) of the network, using a deterministic algorithm.
Nodes have  labels which are (not necessarily distinct)
binary strings. The length of a labeling scheme is the largest length of a label.

We concentrate on the following problems:
\begin{quotation}
What is the shortest labeling scheme that permits size discovery in all radio networks of maximum degree $\Delta$?
What is the shortest labeling scheme that permits diameter discovery in all radio networks?
\end{quotation}

Our main result states that the minimum length of a labeling scheme that permits size discovery is $\Theta(\log\log \Delta)$. The upper bound is
proven by designing a size discovery algorithm using a labeling scheme of length $O(\log\log \Delta)$, for all networks  of maximum degree $\Delta$.
The matching lower bound is proven by constructing a class of graphs (in fact even of trees) of maximum degree $\Delta$,
for which any size discovery algorithm must use a labeling scheme of length at least $\Omega(\log \log\Delta)$ on some graph of this class.
By contrast, we show that diameter discovery can be done in all radio networks using a labeling scheme of constant length.

{\bf Keywords:} radio network, collision detection, network size, network diameter, labeling scheme
\end{abstract}

\section{Introduction}

\subsection{The model and the problem}

The number of nodes of a network, called its {\em size}, and the largest distance between nodes of a network, called its {\em diameter}, are among the most important network parameters. Knowing the size
and/or diameter (or a good upper bound on those parameters) by nodes of a network or by mobile agents operating in it, is a prerequisite of many distributed network algorithms, ranging from broadcasting and gossiping, through leader election, to rendezvous and exploration.

A radio network is a collection of stations, called nodes, with wireless transmission and receiving capabilities. It is modeled as a simple connected undirected graph.
As it is usually assumed in the algorithmic theory of radio networks \cite{CGR,GPPR,GPX},  all nodes start simultaneously and communicate in synchronous rounds. In each round, a node can either transmit a message to all its neighbors, or stay silent and listen. At the receiving end, a node $v$ hears a message from a neighbor $w$ in a given round, if $v$ listens in this round, and if $w$ is its only neighbor that transmits in this round. If $v$ listens in a round,
and two or more neighbors of $v$ transmit in this round, a {\em collision} occurs at $v$. If $v$ transmits in a round, it does not hear anything in this round. Two scenarios are considered in the literature: if listening nodes can distinguish collision
from silence (the latter occurs when no neighbor transmits), we say that the network has the {\em collision detection} capability, otherwise there is no collision detection.

We consider the tasks of {\em size discovery} and {\em diameter discovery}: finding the size (resp. the diameter) of an unknown radio network with collision detection. All nodes have to output the size
(resp. the diameter) of the network, using a deterministic algorithm.
Nodes have  labels which are (not necessarily distinct)
binary strings. These labels are given to (otherwise anonymous) nodes by an oracle knowing the network, whose aim is to help the nodes in executing a size or diameter discovery algorithm using these labels. Such {\em informative labeling schemes}, also referred to as {\em advice} given to nodes, have been previously studied, e.g.,  in the context of ancestor queries \cite{AKM01}, MST computation \cite{FKL}, and topology recognition \cite{FPP}, for wired networks, and in the context of topology recognition \cite{GP} and broadcasting \cite{EGMP} for radio networks.
The length of a labeling scheme is the largest length of a label. A priori, every node knows only its own label.

In this paper we concentrate on the problem of finding a shortest labeling scheme permitting size and diameter discovery in radio networks with collision detection.
Clearly, some labels have to be given to nodes, because otherwise (in anonymous radio networks) no deterministic communication is possible. Indeed, for any deterministic algorithm in an anonymous network, all nodes would transmit in exactly the same rounds, and hence no node would ever hear anything.  On the other hand, labeling schemes of
length $\Theta (\log n)$, for $n$-node networks, are certainly enough to discover the size of the network, as it can be then coded in the labels. Similarly, length $\Theta (\log D)$ is enough to discover the diameter $D$. Our aim is to answer the following questions.
\begin{quotation}
What is the shortest labeling scheme that permits size discovery in all radio networks of maximum degree $\Delta$?
What is the shortest labeling scheme that permits diameter discovery in all radio networks? \end{quotation}

\subsection{Our results}

Our main result states that the minimum length of a labeling scheme that permits size discovery is $\Theta(\log\log \Delta)$. The upper bound is
proven by designing a size discovery algorithm using a labeling scheme of length $O(\log\log \Delta)$, for all networks  of maximum degree $\Delta$.
The matching lower bound is proven by constructing a class of graphs (in fact even of trees) of maximum degree $\Delta$,
for which any size discovery algorithm must use a labeling scheme of length at least $\Omega(\log \log\Delta)$ on some graph of this class.
By contrast, we show that diameter discovery can be done in all radio networks using a labeling scheme of constant length.

\subsection{Related work}

Algorithmic problems in radio networks modeled as graphs were studied for such distributed tasks as broadcasting \cite{CGR,GPX}, gossiping \cite{CGR,GPPR} and leader election
\cite{CD,KP}. In some cases \cite{CGR,GPPR}, the model without collision detection was used, in others \cite{GHK,KP}, the collision detection capability was assumed.

Providing nodes of a network, or mobile agents circulating in it, with information of arbitrary type (in the form of binary strings) that can be used by an algorithm to perform some network task, has been
proposed in \cite{AKM01,CFIKP,DP,EFKR,FGIP,FIP1,FIP2,FKL,FP,GPPR02,IKP,KKKP02,KKP05,SN}. This approach was referred to as
algorithms using {\em informative labeling schemes}, or equivalently, algorithms with {\em advice}.
When advice is given to nodes,  two variations are considered: either the binary string given to nodes is the same for all of them \cite{GMP} or different strings may be given to different nodes
\cite{FKL,FPP}, as in our present case.
 If strings may be different, they can be considered as labels assigned to (otherwise anonymous) nodes.
Several authors studied the minimum length of labels required for a given network problem to be solvable, or to solve a
network problem in an efficient way. The framework of advice or of labeling schemes permits us to quantify the amount of needed information,
regardless of the type of information that is provided and of the way the algorithm subsequently uses it.

In \cite{FIP1}, the authors compared the minimum size of advice required to
solve two information dissemination problems, using a linear number of messages.
 In \cite{KKP05}, given a distributed representation of a solution for a problem,
the authors investigated the number of bits of communication needed to verify the legality of the represented solution.
 In \cite{FIP2}, the authors
established the size of advice needed to break competitive ratio 2 of an exploration algorithm in trees.
In \cite{FKL}, it was shown that advice of constant size permits to carry out the distributed construction of a minimum
spanning tree in logarithmic time.
In \cite{GPPR}, short labeling schemes were constructed with the aim to answer queries about the distance between any pair of nodes.
In \cite{EFKR}, the advice paradigm was used for online problems.
In the case of \cite{SN}, the issue was not efficiency but feasibility: it
was shown that $\Theta(n\log n)$ is the minimum size of advice
required to perform monotone connected graph clearing.

There are three papers studying the size of advice in the context of radio networks.
In \cite{IKP}, the authors studied radio networks without collision detection for
which it is possible to perform centralized broadcasting in constant time. They proved that
a total of $O(n)$ bits of additional information (i.e., not counting the labels of nodes) given to all nodes are sufficient
for performing broadcast in constant time in such networks, and a total of
$o(n)$ bits are not enough. In \cite{GP}, the authors considered the problem of topology recognition in wireless trees without collision detection.
Similarly to the present paper, they investigated short labeling schemes permitting to accomplish this task.
It should be noted that the results in \cite{GP} and in the present paper are not comparable:
\cite{GP} studies a harder task (topology recognition) in a weaker model (no collision detection), but restricts attention only to trees, while the present paper studies easier tasks (size and diameter discovery) in a stronger model (with collision detection) but our results hold for arbitrary networks.
In a recent paper \cite{EGMP}, the authors considered the problem of broadcasting in radio networks without collision detection, and proved that this can be done using a labeling scheme of constant length.

\section{Preliminaries}

According to the definition of labeling schemes, a label of any node should be a finite binary string.
For ease of comprehension, in our positive result concerning size discovery, we present our labels in a more structured way, namely as sequences $(a,b,c,d)$, where $a$ is a binary string of length 7, and each of $b$, $c$ and $d$ is a pair whose first term is a binary string, and the second term is a bit.
Each of the components $a$, $b$, $c$, $d$, is later used in the size discovery algorithm in a particular way. It is well known that such a sequence $(a,b,c,d)$ can be unambiguously coded as a single binary string whose length is a constant multiple of the sum of lengths of all binary strings that compose it. Hence, presenting labels in this more structured way and skipping the details of the encoding does not change the order of magnitude
of the length of the constructed labeling schemes.

In our algorithms, we use the subroutine $Wave(x)$, for a positive integer $x$, that can be implemented in radio networks with collision detection
(cf. \cite{CGGPR} where a similar procedure was called Algorithm {\tt Encoded-Broadcast}). We describe the subroutine below,  for the sake of completeness. The aim of $Wave(x)$ is to transmit the integer $x$ to all nodes of the network, bit by bit.
During the execution of $Wave(x)$, each node is colored either blue, or red or white. Blue and white nodes know $x$ and after each phase red neighbors of blue nodes learn $x$ and become blue, while blue nodes become white. $Wave(x)$ is initiated by some node $v$. At the beginning, $v$ is blue and all other nodes are red.

Let $p=(a_1a_2\dots a_k)$ be the binary representation of the integer $x$. Consider the binary sequence $p^*=(b_1,b_2, \dots, b_{2k+2})$ of length $2k+2$ that is formed from $p$ by replacing every bit 1 by 10, every bit 0 by 00, and adding 11 at the end. For example, if $p=(1101)$ then $p^*=(1010001011)$.  Each phase of $Wave(x)$ lasts $2k+2$ rounds, starting in some round $r+1$.
In consecutive rounds $r+1,\dots ,r+2k+2$, every blue node transmits some message $m$ in round $r+i$, if $b_i=1$, and remains silent if $b_i=0$. A red node $w$ listens until a round when it hears either a collision or a message (this is round $r+1$), and then until two consecutive rounds occur when it hears either a collision or a message. Suppose that the second of these two rounds is round $s$.  Then $w$ decodes $p^*$  by putting $b_i=1$ if it heard a message or a collision in round $t+i$, and putting $b_i=0$ if it heard silence in round $t+i$. From $p^*$ it computes unambiguously $p$ and then $x$. Round $s=r+2k+2$ is the round in which all blue nodes finished transmitting in the current phase of $Wave(x)$. In round $s+1$ which starts the next phase, all blue nodes become white and all red nodes that heard a collision or a message in round $s$ become blue.

  In this way, the subroutine $Wave(x)$ proceeds from level to level, where the $i$-th level is formed by the nodes at distance $i$ from $v$ in the graph.
  Every node at a level $i>0$, is involved in the subroutine in two consecutive phases, first as a red node and then as a blue node. The initiating node $v$ is involved only in the first phase.
  Since a sequence of the form $p^*$ cannot be a prefix of another sequence of the form $q^*$, every node can determine when the transmissions from the previous level are finished, and can correctly decode $x$. In our applications, no other transmissions are performed simultaneously with transmissions prescribed by $Wave(x)$,
  and hence nodes can compute when a given $Wave$ will terminate.

\section{Finding the size of a network}

This section is devoted to the task of finding the size of a network.

\subsection{The Algorithm {\tt Size Discovery}}

In this section, we construct a labeling scheme of length $O(\log\log \Delta)$ and a size discovery algorithm using this scheme and working for any radio network of maximum degree $\Delta$.

\subsubsection{Construction of the labeling scheme}

Let $G$ be a graph of maximum degree $\Delta$. Let $r$ be any node of $G$ of degree $\Delta$. For $l \ge 0$, a node is said to be in level $l$, if its distance from $r$ is $l$. Let $h$ be the maximum level in $G$. Let $V(l)$ be the set of nodes in level $l$.
For any node $v\in V(l)$, let $N(v)$ be the set of neighbors of $v$ which are in level $l+1$.

Before giving the detailed description of the labeling scheme and of the algorithm, we give a high-level idea of our size discovery method.
The algorithm is executed level by level in a bottom up fashion. Each node of a level maintains an integer variable, $weight$, such that the sum of the weights of all nodes in level $l$, for $0\le l\le h$, is equal to the total number of nodes in levels $l'\ge l$.  Using the assigned labels, these weights are transmitted to a special set of nodes, called {\em upper set}, in level $l-1$. An upper set in level $l-1$ is an ordered subset of the nodes in level $l-1$, which covers all the nodes in level $l$, i.e., each node in level $l$ is a neighbor of at least one node in the upper set of level $l$. Using this property of upper sets, the capability of collision detection, and a specially designed labeling of the nodes, multiple accounting of the weights is prevented.
The weights of each level are transmitted up the tree, and finally the node at level 0, i.e., the root calculates its weight, which is the size of the network. In the final stage, this size is transmitted to all other nodes.

%

Let $U=\{v_1,v_2,\cdots ,v_k\}$ be an ordered set of nodes in level $l$. For all $v \in V(l)\setminus U$, we define $N'(v,U)= N(v)\setminus (\cup_{w \in U}N(w))$ and for $v_j\in U$, $N'(v_j,U)=N(v_j)\setminus (\cup_{i=1}^{j-1}N(v_i))$. An ordered subset $U$ of $V(l)$ is said to be an {\em upper set} at level $l$, if for each $v \in U$, $N'(v,U) \ne \emptyset$ and $\cup_{w\in U}N(w)=V(l+1)$.

{Fig. \ref{fig:fig1} shows an example of an upper set $U=\{v_1,v_3,v_4\}$ for a two-level graph. The node $v_2$ is not part of the upper set, as the set of neighbors of $v_2$ is a subset of the neighbors of $v_1$.}

\begin{figure}[h]
\centering
\includegraphics[width=0.5\textwidth]{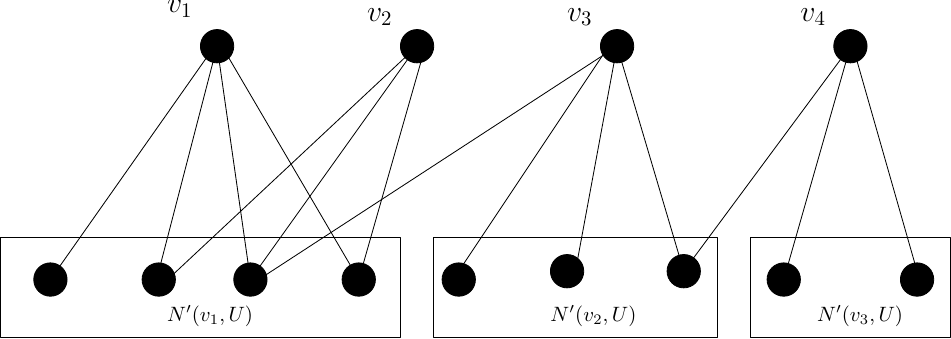}
\caption{Example of an upper set $U=\{v_1,v_3,v_4\}$}
\label{fig:fig1}
\end{figure}

Below, we propose an algorithm that computes an upper set at each level $l$, for $1\le l\le h-1$.
The algorithm works in a recursive way. The first node $v_1$ of the set is chosen arbitrarily. At any step, let $US(l)=\{v_1, \cdots, v_i\}$ be the set computed by the algorithm in the previous step. Let $u_1^j,u_2^j, \cdots, u_{|N'(v_j,US(l))|}^j$ be nodes in $N'(v_j,US(l))$ for $1 \le j\le i$. Let $k_j=1+\log \lfloor|N'(v_j,US(l))|\rfloor$. If $V(l+1)\setminus (\cup_{w\in US(l)}N(w)) \ne \emptyset$, then the next node in $US(l)$ is added using the following rules.

\begin{enumerate}
\item  Find the last node $v_a$ in $US(l)$ that has a common neighbor in $\{u_1^a,u_2^a, \cdots u_{k_a}^a\}$ with some node  $v \in V(l)\setminus US(l)$ such that $N(v)\setminus (\cup_{w\in US(l)}N(w)) \ne \emptyset$.  Choose such a node $v$ in $US(l)$ that has the common neighbor $u_b^a$ with $v_a$, where $b=\min\{1,2,\cdots,k_a\}$. Add $v$ to $US(l)$ as the node $v_{i+1}$.
\item If no such node in $v_a$ exists in $US(l)$, add any node $v \in V(l)\setminus US(l)$ with $N(v)\setminus (\cup_{w\in US(l)}N(w)) \ne \emptyset$ as the node $v_{i+1}$.
\end{enumerate}
The construction of $US(l)$ is completed when $\cup_{w\in US(l)}N(w)=V(l+1)$.

Also, for every node $v_i \in US(l)$, the nodes $u_1^i,u_2^i, \cdots, u_{k_i}^j$ are assigned some unique id's from the set $\{1,2,\cdots, \lfloor \log \Delta\rfloor+1\}$. Moreover,  if a node $v_m$ is added to $US(l)$ according to the first rule, where $v_m$ has a common neighbor $u_c^i$ with $v_i$, then the node $u_1^m$ gets the same id as $u_c^i$. If a node $v_m$ is added according to the 2nd rule, then $u_1^m$ gets the id 1.  These id's will be later used to construct the labels of  the nodes. In Algorithm \ref{alg:alg1} we give the pseudocode of the procedure that constructs
an upper set $US(l)$ for each level $l$, and that assigns id's to some nodes of $V(l+1)$, as explained above. This procedure uses in turn the subroutine  \textsc{Compute$(v,j)$} whose pseudocode is presented in Algorithm \ref{alg:2}..

\begin{algorithm}
\caption{\textsc{$ComputeSet(l)$}}
\begin{algorithmic}[1]
\STATE{$US(l)\leftarrow\{\}$, $count\leftarrow 0$}
\STATE{$V'(l) \leftarrow V(l)$ }
\FOR{all $v\in V(l)$}
\STATE{$N'(v,US(l))\leftarrow N(v)$}
\ENDFOR
\FOR{ all $v \in V'(l)$}
\STATE{ \textsc{Compute$(v,1)$}}
\ENDFOR
\STATE{Return $US(l)$}
\end{algorithmic}\label{alg:alg1}
\end{algorithm}

\begin{algorithm}
\caption{\textsc{Compute$(v,j)$}}
\begin{algorithmic}[1]
\STATE{$count\leftarrow count+1$}
\STATE{$v_{count}\leftarrow v$}
\STATE{$ID(v)\leftarrow \{j\}$}
\STATE{$US(l)\leftarrow US(l) \cup \{v_{count}\}$}
\FOR{all nodes $v'\in V'(l)\setminus US(l)$}
\STATE{$N'(v',US(l)) \leftarrow N'(v',US(l)) \setminus (\cup_{w\in US(l)}N(w))$}
\IF{$N'(v') =\emptyset$}
 \STATE{$V'(l) \leftarrow V'(l) \setminus \{v'\}$}
\ENDIF
\ENDFOR
\IF{$V'(l) \ne \emptyset$}
\STATE{Let $N'(v,US(l))=\{u_1,u_2,\cdots,u_k\}$}
\FOR{$i=1$ to $\lfloor \log k \rfloor$+1}
\label{st:stp2}\IF{$i=1$}
\STATE{$p \leftarrow j$}
\ELSE
\STATE{$p \leftarrow \min\left(\{1,2,\cdots,\lfloor\log \Delta\rfloor+1\}\setminus ID(v)\right)$}
\ENDIF
\STATE{$id(u_i) \leftarrow p$}
\label{st:stp1}\WHILE{there exists some $v'' \in V'(l)$ such that $u_i \in N(v'') $}
\STATE{\textsc{Compute$(v,p)$}}
\ENDWHILE
\STATE{$ID(v) \leftarrow ID(v) \cup \{p\}$}
\ENDFOR
\ENDIF
\end{algorithmic}\label{alg:2}
\end{algorithm}

The nodes in the upper set $US(l)$ and the assignment of their ids are shown in Fig. \ref{fig:example}. The node $v_1$ is chosen in $US(l)$ arbitrarily. The number of neighbors of $v_1$ in $V(l+1)$ is 5 (the node $v_1$ and its neighbors in $V(l+1)$ are shown as the gray circles.). Three neighbors of $v_1$, (as $1+ \lfloor \log 5 \rfloor =3$) are assigned ids 1,2,3 as shown in the figure. Note that the node with id 3 is also a neighbor of another node in $V(l)$. Hence, this node is selected as the next node in $US(l)$, according to Step \ref{st:stp1} of Algorithm \ref{alg:2}.  $N'(v_2, US(l))$ is 4 (the node $v_2$ and the nodes in $N'(v_2, US(l))$ are shown as the dotted pink circles). According to Step \ref{st:stp2} of Algorithm \ref{alg:2}, the neighbor $u_2^1$ of $v_2$ gets the same id 3 as the node $u_1^3$. In a similar fashion, the node $v_3$ (shown as dashed green circle) is chosen as the next node in $US(l)$. After adding $v_3$, no node can be added in $US(l)$ according to the first rule. The node $v_4$ is chosen according to the second rule and added to $US(l)$ and the construction of $US(l)$ is complete, as all nodes of $V(l+1)$ are taken care of.

\begin{figure}[h]
\centering
\includegraphics[width=1.0\textwidth]{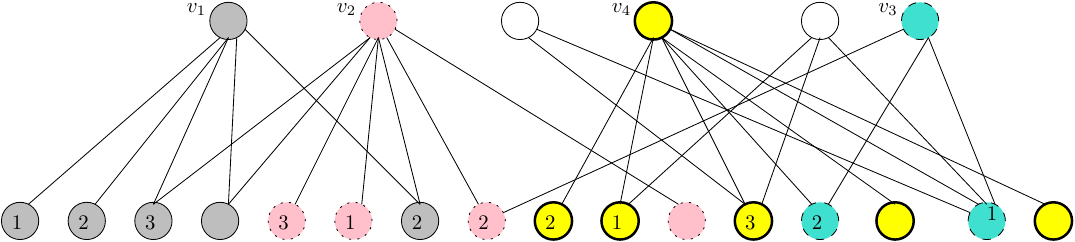}
\caption{Showing  the assignment of Ids of the nodes in $V(l+1)$}
\label{fig:example}
\end{figure}

For $0\le l\le h$, we define the {\em weight} $W(v)$ of a node $v \in V(l)$ as follows. If $v \in V(h)$, we define $W(v)=1$. For a node $v \in V(l)$, where $0\le l\le h-1$, we define $W(v)=1+\sum_{u\in N'(v,US(l))} W(u)$. Thus, for any level $l$, the sum of the weights of nodes at level $l$ is equal to the total number of nodes in levels $l'\geq l$.
Hence the weight of the node $r$ is the size of the network.

The ids assigned to the nodes in the above fashion have the following purpose. The objective of the algorithm is to let the nodes in $US(l)$  learn jointly the sum of the weights of the nodes in levels $ \ge l$. This can only be done if the weight of a node in level $l+1$ is transmitted to exactly one node in level $l$. The ids assigned to the nodes in the level $l+1$ are helpful in order to ensure this. In Fig. \ref{fig:example}, the node $v_2$ has two neighbors in level $l+1$ with id 3. The algorithm asks every node with a specific id to transmit in a specific round in different phases. Now, when the nodes with id 3 transmit, the node $v_1$ successfully receives the messages, but a collision happens at $v_2$. The node $v_2$ immediately learns that the ongoing message transmissions are dedicated to some other node and therefore it ignores the activities for the remaining rounds in the current phase.  Once the node $v_1$ learns its weight, it asks its neighbors in $N'(v_1,US(l))$ not to participate in the subsequent rounds. In the next phase, $v_2$ does not hear any collision (as one node with id 3 does not transmit) while its neighbors with positive ids transmits, hence it successfully learns its weight. But the node $v_3$ hears a collision when the node with id 2 transmits. Therefore, $v_3$ ignores all the activities in this phase. In the next phase, $v_3$ successfully learns its weight. In this way the consecutive nodes which are added in $US(l)$ by rule 1 learn their weights one by one in different phases, hence no multiple accounting can occur.

We are now ready to define the labeling scheme $\Lambda$ that will be used by our size discovery algorithm. The label $\Lambda(v)$ of each node $v$ contains two parts. The first part is a vector of markers that is a binary string of length 7, used to identify nodes with different properties.
The second part is a vector of three {\em tags}. Each tag is a pair $(id,b)$, where $id$ is the binary representation of  an integer from the set $\{1,2,\cdots, \lfloor\log \Delta\rfloor+1\}$, and $b$ is either 0 or 1. Every node will use the tags to identify the time slot when it should transmit  and what it should transmit in this particular time slot.

We first describe how the markers are assigned to different nodes of $G$.
\begin{enumerate}
\item The node $r$ gets the marker 0, and one of the nodes in level $h$ gets the marker 1.
\item Choose any set of $\lfloor\log \Delta \rfloor+1$ nodes in $N(r)$ and give them the marker 2.
\item Let $P$ be a simple path from $r$ to the node with marker 1. All the internal nodes in $P$ get the marker 3.
\item For each $l$,  $0\le l\le h-1$, all the nodes in $US(l)$ get the marker 4. The last node of $US(l)$ gets the marker 5 and a unique node from $V(l+1)$ with maximum weight in this set gets the marker 6.
\end{enumerate}

The first part of every label is a binary string  $M$ of length 7, where the markers are stored. Note that a node can be marked by multiple markers. If the node is marked by the marker $i$, for $i=0,\dots ,6$, we have $M(i)=1$; otherwise, $M(i)=0$.

The markers are assigned to the nodes in the network in order to identify different types of nodes that play different roles in the proposed algorithm. Some specific rounds are allotted to  each level during which all the nodes of that level transmit their weights. Every node learns from its label in which time slot it has to transmit.
The root is distinguished by the marker 0. In the algorithm, the node with marker 0 first learns the value of its degree which is $\Delta$, using the messages transmitted by the nodes with marker 2.  Then it transmits this value to all other nodes using Subroutine $Wave(\Delta)$.
The node with marker 1 is recognized as one of the nodes at the last level in the BFS tree rooted at the node $r$. This node is the first that learns the value of $h$ and then transmits it to $r$ using the internal  nodes, which are marked by marker 3, in the shortest path from this node to $r$. Markers 4 and 5 are used to identify nodes which are responsible for transmitting the value of the weights, and the node assigned marker 6 is the node in a level which transmits its weight last among the nodes in that level.

The second part of the label of each node $v$ is a vector $[L_1(v),L_2(v), L_3(v)]$ containing three  tags, namely, the {\it $\Delta$-learning tag} $L_1(v)$, the {\it collision tag} $L_2(v)$, and the {\em weight-transmission tag} $L_3(v)$. The assignment of the above tags is described below.

\begin{enumerate}
\item The $\Delta$-learning tags will be used for learning the value of $\Delta$ by the root $r$. The node $r$ and all the nodes with marker 2 get the $\Delta$-learning tags as follows. The nodes with marker 2 are neighbors $w_1, w_2, \dots, w_{\lfloor\log \Delta\rfloor+1}$ of the node $r$. For each $i$, $1\le i\le \lfloor\log \Delta\rfloor+1$, node $w_i$ is assigned the tag $(B(i),b_i)$, where $B(i)$ is the binary representation of the integer $i$ and $b_i$ is the $i$-th bit of the binary representation of $\Delta$. The node $r$ gets the tag $(B,0)$, where $B$ is the binary representation of the integer $\lfloor\log \Delta\rfloor +1$. All other nodes of $G$ get the $\Delta$-learning tag $(0,0)$.
\item The collision tags will be used to create collisions. For each $l$, $0 \le l \le h-1$, each node in $V(l+1)$ gets the collision tag as follows. Let $US(l)=\{v_1,v_2,\cdots ,v_k\}$. For $1 \leq i \leq k$ and $1\le j\le \lfloor \log |N'(v_i,US(l))|\rfloor+1$, the node $u_j^i \in V(l+1)$ gets the collision tag $(id(u_j^i), b_m)$, where $m$ is the position of the integer $id(u_j^i)$ in the set $ID(v_i)$ in increasing order, and $b_m$ is the $m$-th bit of the binary representation of $|N'(v_i,US(l))|$. All other nodes $v \in V(l+1)$ get the collision tag $(0,0)$.
\item The weight-transmission tags will be used by nodes to transmit their weight to a unique node in the previous level. For each $l$, $0 \le l \le h-1$, each node in $V(l+1)$ gets the transmission tag as follows. Let $US(l)=\{v_1,v_2,\cdots, v_k\}$. For $1\le i\le k$, let $Q_i(x)=\{u \in N'(v_i,US(l))| W(u)=x\}$.  Choose any subset
$\{w_1,w_2,\dots, w_{ \lfloor \log |Q_i(x)|\rfloor+1}\}$ of $Q_i(x)$.  For $1 \le i \le  \lfloor \log |Q_i(x)|\rfloor+1$, the node $w_i$ gets the weight-transmission tag $(B(i),b_i)$, where $B(i)$ is the binary representation of the integer $i$, and $b_i$ is the $i$-th bit of the binary representation of $|Q_i(x)|$. All other nodes $v \in V(l+1)$ get the weight-transmission tag $(0,0)$.
\end{enumerate}
This completes the description of the labeling scheme $\Lambda$.

\subsubsection{Description of Algorithm {\tt Size Discovery}}

Algorithm {\tt Size Discovery} using the scheme $\Lambda$
consists of three procedures, namely Procedure {\tt Parameter Learning}, Procedure {\tt Size Learning},  and Procedure {\tt Final}.
The high-level idea and the detailed descriptions of each of these procedures are given below.

Procedure {\tt Parameter Learning}. The aim of this procedure is for every node in $G$ to learn three integers: $\Delta$, the number of the level to which the node belongs,
and  $h$. The procedure consists of two stages. In the first stage, that starts in round 1, every node with $M(2)=1$ and $M(0) =0$ (i.e., a neighbor of $r$ with marker 2) transmits its $\Delta$-learning tag in round $i$, if the id in the first component of this tag is $i$. The node with $M(0)=1$, i.e., the node  $r$, collects all the tags until it received a message from a node which has the same id as the id of $r$ in the $\Delta$-learning tag. After receiving this message, the node $r$ has learned all pairs $(B(1),b_1)$, ..., $(B(m), b_m)$, where $m$ is the id of $r$ and $B(i)$ is the binary representation of the integer $i$, corresponding to the $\Delta$-learning tag
 at the respective nodes. Then node $r$ computes the string $s=(b_1b_2 \dots b_m)$. This is the binary representation of $\Delta$.

 In the second stage, after learning $\Delta$, the node $r$ initiates the subroutine {\em Wave}$(\Delta)$. Every node other than $r$ waits until it detects two consecutive non-silent rounds. This indicates the end of the wave at this node and happens $2m+2$ rounds after the wave has been  started by the nodes of the previous level. The node computes $s$, learns $\Delta$, computes $m=\lfloor\log \Delta\rfloor+1$, and sets its level number as $j$, if  the end of the wave at this node occurred in round $m+j(2m+2)$.

When the unique node with  $M(1)=1$ learns its level number (which is $h$), it transmits the value of $h$ in the next round.
After receiving the first message containing an integer, a node with $M(3)=1$ sets $h$ to this integer and retransmits it. When the node with $M(0)=1$, i.e., the node $r$, gets the first message after round $m$ that contains an integer, it learns $h$ and initiates $Wave(h)$. The stage and the entire procedure end in round $t_1=m+h(2m+2) +h + h(2( \lfloor \log h \rfloor +1)+2)$. Note that after learning $h$, every node can compute $t_1$ and thus knows when Procedure
{\tt Parameter Learning} ends.

Procedure {\tt Size Learning}. This is the crucial procedure of the algorithm.  Its aim is to learn the size of the graph by the node $r$, i.e., to learn its weight $W(r)$. This procedure consists of $h$ phases.
In the $i$-th phase,  where $1\le i\le h$,  the participating nodes are from level $h-i+1$ and from level $h-i$. We will show by induction on $i$ that at the end of  the $i$-th phase,
all nodes of level $h-i$ correctly compute their weights. Thus at the end of the $h$-th phase, the node $r$ will learn its weight, i.e.,  the size of the network.
The high-level idea of the $i$-th phase is the following. In order to learn its weight, a node $v$ in $US(h-i)$ must learn the weights of all nodes $u$ in $N'(v,US(h-i))$ and subsequently add all these weights. Weight-transmission tags are used to achieve this.
The difficulty consists in preventing other neighbors in level $h-i$ of such nodes $u$ from adding these weights when computing their own weight, as this would result in multiple accounting (see Fig. \ref{fig:fig1}). This is done using collision tags to create collisions in other such nodes, so that nodes $z$ in $US(h-i)$ can identify neighbors in level $h-i+1$ outside of $N'(z,US(h-i))$ and ignore their weights.
A node transmits its weight-transmission tag in a round which is an increasing function of its weight. Since the nodes in $US(h-i)$ do not have any knowledge about their degree, they must learn the maximum possible weight of a node
in level $h-i+1$, to determine how long they must wait before receiving the last message from such a node.

We now give a detailed description of the $i$-th phase. At the beginning of the first phase, all nodes in level $h$ set their weight to 1.
The $i$-th phase starts in round $t_2(i)+1$, where $t_2(1)=t_1$, and ends in round $t_2(i+1)$. We will show that $t_2(i+1)$ will be known by every node of the graph by the end of the $i$-th phase, i.e., by the round $t_2(i+1)$.

In round $t_2(i)+1$ (which starts the $i$-th phase) the unique node $u'$ of level $h-i+1$ with $M(6)=1$ (which is a node of this level with maximum weight), initiates $Wave(W(u'))$. Every node in $G$ learns the value $x_i$ which is the maximal weight of a node in level $h-i+1$, by round $t'_2(i)=t_2(i)+2h (2(\lfloor\log x_i\rfloor +1)+2)$.
Since every node knows $h$ and $t_2(i)$, and it learns $x_i$ during the wave subroutine, it can compute the value $t_2'(i)$ by which $Wave(W(u'))$ is finished.
After learning this integer, every node in level $h-i+1$ and every node in level $h-i$  maintains a variable $status$ which can be either $complete$ or $incomplete$. (The variable $status$ is proper to a particular phase. In what follows we consider $status$ for phase $i$.) Initially the status of every node in level $h-i$  with $M(4)=1$ (the nodes in $US(h-i)$) and of every node in level $h-i+1$ is $incomplete$. The initial status of the nodes in level $h-i$ with $M(4)=0$ is $complete$.

At any time, only incomplete nodes will participate in this phase. The nodes with $M(4)=0$, i.e., the nodes outside $US(h-i)$, set their weights to 1 and never participate in this phase.

After learning its weight, a node $v$ in $US(h-i)$ gets status $complete$ and transmits a stop message in a special round. All the nodes in $N'(v,US(h-i))$ learn this stop message either by receiving it or by detecting a collision in this special round, and become $complete$. Thus the nodes in $N'(v,US(h-i))$ never transmit in subsequent rounds, and this prevents multiple accounting of the weights.

Let $z$ be a node in level $h-i+1$ with status $incomplete$.

If the id in the collision tag of $z$ is a positive integer $e$, then $z$ performs the following steps.

\begin{itemize}
\item The node $z$ transmits its collision tag for the first time in the $i$-th phase in round $t'_2(i)+e$.
After that, the node $z$ transmits its collision tag in every round $t'_2(i)+e+j\tau_i$, where  $\tau_i=\lfloor\log \Delta \rfloor+1+x_i(\lfloor\log \Delta\rfloor+1)+1$,
and $j\geq 1$, until it gets a stop message or detects a collision in round $t'_2(i)+j'\tau_i$, for some integer $j' \ge 1$.
In the latter case, node $z$ updates its status to $complete$.
\end{itemize}

If the id in the weight-transmission tag of $z$ is a positive integer $e'$, then $z$ performs the following steps.

\begin{itemize}
\item The node $z$ transmits the pair $(t,W(z))$, where $t$ is its weight-transmission tag and $W(z)$ is its weight, for the first time in the $i$-th phase in round $t'_2(i)+(W(z)-1)(\lfloor\log \Delta \rfloor+1)+e'$. After that the node $z$ transmits  $(t,W(z))$ in every round  $t'_2(i)+(W(z)-1)(\lfloor\log \Delta \rfloor+1)+e'+j\tau_i$, where  $\tau_i=\lfloor\log \Delta \rfloor+1+x_i(\lfloor\log \Delta\rfloor+1)+1$,
and $j\geq 1$, until it gets a stop message or detects a collision in the round  $t'_2(i)+j'\tau_i$ for some integer $j' \ge 1$. In the latter case, it updates its status to $complete$.
\end{itemize}

Let $z'$ be a node with $M(4)=1$, i.e., a node in $US(h-i)$. The node $z'$ (with status {\em incomplete}) performs the following steps.

\begin{itemize}
\item If $z'$ does not detect any collision in the time interval $[t'_2(i)+(j-1)\tau_i, t'_2(i)+(j-1)\tau_i+ \lfloor \log \Delta \rfloor+1]$, for some integer $j \ge 1$, then the node changes its status to $complete$. In this interval, the node $z'$ received the collision tags from the nodes in $N'(z',US(h-i))$.
    Suppose that the node $z'$ learns the pairs $(B(g_1),b_1)$, $(B(g_2),b_2)$, $\cdots$ , $(B(g_k), b_k)$, where $B(g_1), B(g_2), \cdots, B(g_k)$ are the binary representations of the integers $g_1, g_2, \cdots,g_k$, respectively, in the increasing order, corresponding to the collision tags of the respective nodes.
    The node $z'$ computes $s'=(b_1b_2\cdots b_k)$. Let $d$ be the integer whose binary representation is $s'$.
    The integer $d$ is the size of $N'(z',US(h-i))$.
    Then $z'$ waits until round $t_2'(i)+j\tau_i$.
    By this time, all nodes in level $h-i+1$ that transmitted according to their collision tags and weight-transmission tags, have already completed all these transmissions.
    If $z'$ detects any collision in the time interval $[t'_2(i)+(j-1)\tau_i+ \lfloor \log \Delta \rfloor+2,t_2'(i)+j\tau_i-1]$,
    it changes its status back to $incomplete$. Otherwise, for $1 \le f \le x_i$, let $(B(1),b_1), (B(2),b_2), \cdots, (B(g(f)), b_{g(f)})$ be the weight-transmission tags that the node $z'$ received from a node with weight $f$, where $B(a)$ is the binary representation of the integer $a$. Let $s_f=(b_1b_2\cdots b_{i(f)})$ and let $d_f$ be the integer whose binary representation is $s_f$. The integer $d_f$ is the total number of nodes of weight $f$ in $N'(z',US(h-i))$.

   The node $z'$ computes the value $\sum_f d_f$. If the node $z'$ had received any message from a node which is not in $N'(z',US(h-i))$, then the sum
    $\sum_f d_f$ cannot be equal to the integer $d$, and hence the node learns that there is a danger of multiple accounting of weights. In that case,
    the node changes its status back to $incomplete$.

    Otherwise, if $\sum_f d_f=d$, node $z'$ assigns $W(z')=1+\sum_f (fd_f)$. After computing $W(z')$, the node $z'$ transmits a stop message in round $t_2'(i)+j\tau_i$. If $z'$ is the node with $M(5)=1$ (i.e., the last node of $US(h-i)$), then after sending the stop message, it initiates $Wave(T)$, where $T$ is the current round number.  After learning $T$ from $Wave(T)$, every
    node in $G$ computes $t_2(i+1)=T+2h(2(\lfloor \log T\rfloor +1)+2)$. This is the round by which $Wave(T)$ is finished.
    In this round, the $i$-th phase of the procedure is finished as well.
\end{itemize}
At the end of the $h$-th phase, the node $r$ learns its weight, sets $n=W(r)$ and the procedure ends.

Procedure {\tt Final}: After computing $n$, the node $r$ initiates $Wave(n)$. Every node in $G$ computes the value of $n$, and outputs it. The procedure ends after all nodes output $n$.

Now our algorithm can be succinctly formulated as follows:

\begin{algorithm}
\caption{\textsc{{\tt Size Discovery}}}
\begin{algorithmic}[1]
\STATE {\tt Parameter Learning}
\STATE {\tt Size Learning}
\STATE {\tt Final}
\end{algorithmic}\label{alg:main}
\end{algorithm}

\subsection{Correctness and analysis}

The proof of the correctness of Algorithm {\tt Size Discovery} is split into two lemmas.
\begin{lemma}
Upon completion of the Procedure {\tt Parameter Learning}, every node in $G$ correctly computes $\Delta$, $h$, and its level number.
Moreover, every node computes the round number $t_1=m+h(2m+2) +h + h(2( \lfloor \log h \rfloor +1)+2)$ by which the procedure is over.
\end{lemma}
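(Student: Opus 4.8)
The plan is to follow the three-part structure of Procedure {\tt Parameter Learning} and verify each stage in turn, tracking round numbers carefully so that the final formula for $t_1$ falls out of the bookkeeping. First I would handle the learning of $\Delta$ by the root $r$. The key observation is that the nodes with marker 2 are exactly $\lfloor\log\Delta\rfloor+1$ distinct neighbors $w_1,\dots,w_{\lfloor\log\Delta\rfloor+1}$ of $r$, and node $w_i$ transmits its $\Delta$-learning tag $(B(i),b_i)$ in round $i$, and only in round $i$. Since these are the only transmissions in this stage and each such round has a single transmitter among $r$'s neighbors, $r$ hears each message cleanly (no collisions), collecting the pairs $(B(1),b_1),\dots,(B(m),b_m)$ where $m=\lfloor\log\Delta\rfloor+1$ is exactly the integer encoded in $r$'s own tag. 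Then $r$ reconstructs the string $s=(b_1\cdots b_m)$, which by the definition of the tags is the binary representation of $\Delta$. So after round $m$, node $r$ knows $\Delta$. I would note explicitly that $r$ can detect termination of this stage because it knows $m$ from its own tag.

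Next I would handle the dissemination of $\Delta$ and of level numbers via $Wave(\Delta)$, started by $r$ in round $m+1$. Here I would invoke the already-established correctness of the $Wave$ subroutine from the Preliminaries: since no other transmissions occur concurrently, every node correctly decodes $\Delta$, and the wave reaches level $j$ exactly during rounds spanning a predictable window; a node at level $j$ detects the end of the wave at it in round $m + j(2m+2)$, since each level is active for $2k+2 = 2m+2$ rounds and the wave starts transmitting in round $m+1$. From the round in which it sees two consecutive non-silent rounds, each node solves for $j$ and sets its level number; it also recomputes $m=\lfloor\log\Delta\rfloor+1$ from $\Delta$. In particular the unique node with $M(1)=1$ is at level $h$ (by construction it was placed in level $h$), so it learns $h$ and transmits it in the next round. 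The marker-3 nodes lie on a simple path $P$ from $r$ to that node, so this integer propagates up the path by successive retransmissions until $r$ receives it; I would argue that the first integer-bearing message $r$ hears after round $m$ is $h$ (the only integer being propagated in this stage), so $r$ correctly sets $h$. Finally $r$ runs $Wave(h)$ so that all nodes learn $h$.

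Then I would do the round accounting to obtain $t_1$. The first stage ends at round $m$. The wave $Wave(\Delta)$ occupies rounds through $m + h(2m+2)$ (the last level $h$ finishes then). The node with $M(1)=1$ transmits $h$ one round later and the value climbs the path $P$, which has at most $h$ internal edges, so this sub-stage costs at most $h$ rounds, bringing us to $m+h(2m+2)+h$. Then $Wave(h)$ runs; writing $k=\lfloor\log h\rfloor+1$ for the bit-length of $h$, each level is active for $2k+2$ rounds and there are $h$ levels below $r$, so this wave finishes $h(2(\lfloor\log h\rfloor+1)+2)$ rounds later, yielding $t_1 = m + h(2m+2) + h + h(2(\lfloor\log h\rfloor+1)+2)$, exactly as claimed. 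Since by the end every node knows $m$, $h$, and the structure of the formula, each node computes $t_1$ and knows when the procedure ends. I would package the per-stage arguments as an induction on the level (for the wave stages) or on the position along $P$ (for the $h$-propagation stage).

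The main obstacle I expect is not any single deep step but rather the careful verification that no unwanted collisions or message interference occur, and that each node can \emph{locally} detect the boundary between stages. Specifically: (i) in the first stage, one must confirm that $r$ hears a message and not a collision in each of rounds $1,\dots,m$ — this relies on the markers being assigned so that the marker-2 nodes are genuinely distinct neighbors of $r$ and on no other node transmitting; (ii) one must confirm that the path $P$ is a simple path and its internal nodes (marker 3) form a chain so the relay of $h$ does not create collisions — here a subtlety is that internal nodes of $P$ that also happen to be adjacent could in principle both transmit, but the "retransmit upon first integer received" rule, together with $P$ being a shortest-or-simple path and the wave having already synchronized level numbers, should rule this out; and (iii) one must check that the $Wave$ instances do not overlap in time, which follows because each node can compute the finishing round of the preceding wave from the data it has learned. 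I would state these as the places requiring the most care and dispatch them with short arguments referencing the structure of the labeling scheme and the properties of $Wave$ established earlier.
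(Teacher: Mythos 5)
Your proposal is correct and follows essentially the same three-stage argument as the paper's own (much terser) proof: $r$ reconstructs $\Delta$ from the $\Delta$-learning tags of the marker-2 neighbors, $Wave(\Delta)$ lets each node at level $j$ deduce $j$ from the round $m+j(2m+2)$ in which the wave ends at it, $h$ is relayed along the marker-3 path and then broadcast via $Wave(h)$, and the round accounting yields $t_1$. Your additional checks (absence of collisions in each stage, locally detectable stage boundaries, $P$ being a shortest path so the relay takes exactly $h$ rounds) are sound and in fact more careful than what the paper records.
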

\begin{proof}
After round $m=\lfloor \log \Delta \rfloor +1 $, the node $r$ learns all pairs $(B(1),b_1)$, ..., $(B(m), b_m)$, where $B(i)$ is the binary representation of the integer $i$, corresponding to the $\Delta$-learning tags at the respective nodes with $M(2)=1$.
According to the assignment of the tags to the nodes, the binary string $s=(b_1b_2\dots b_m)$ is the binary representation of the integer $\Delta$. Therefore, the node $r$ correctly learns $\Delta$.

After learning $\Delta$, the node $r$ initiate $Wave(\Delta)$. For every level $i\ge 1$, the wave ends at the nodes in level $i$ in round $m+i(2m+2)$. The nodes learn the value of $\Delta$ from the wave and calculate their level number. The node in the $h$-th level for which $M(1)=1$, learns $h$ and transmits the value of $h$ along the path with nodes for which $M(3)=1$. The node $r$ learns $h$, and initiate $Wave(h)$. Every node computes $h$ from the wave.  Knowing $m$ and $h$ every node computes $t_1$.
\end{proof}

\begin{lemma}\label{main}
At the end of the $i$-th phase of the Procedure {\tt Size Learning}, every node in level $h-i$ correctly computes its weight.
\end{lemma}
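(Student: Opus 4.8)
The plan is to prove the statement by induction on $i$, $1 \le i \le h$. The base case $i=1$ is immediate: at the start of the first phase, every node in level $h$ has set its weight to $1$, which by definition is the correct weight $W(v)=1$ for $v \in V(h)$; the phase then only needs to let nodes of level $h-1$ aggregate these. For the inductive step, I would assume that at the end of phase $i-1$ every node of level $h-i+1$ has computed its correct weight, and show that phase $i$ lets every node of level $h-i$ do the same.

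The heart of the argument is the analysis of a single node $z' \in US(h-i)$ with marker $M(4)=1$, together with the interaction between the "collision tag" transmissions and the "weight-transmission tag" transmissions inside one period of length $\tau_i = \lfloor\log\Delta\rfloor+1+x_i(\lfloor\log\Delta\rfloor+1)+1$. I would first establish that, in the window $[t'_2(i)+(j-1)\tau_i,\ t'_2(i)+(j-1)\tau_i+\lfloor\log\Delta\rfloor+1]$, the nodes transmitting according to their collision tags are exactly those $u^{i'}_{j'} \in V(h-i+1)$ with a positive id, each in a distinct round $t'_2(i)+(j-1)\tau_i+e$ determined by its id $e$; hence $z'$ hears a clean message in round $t'_2(i)+(j-1)\tau_i+e$ precisely when exactly one node of $N(z')$ carries id $e$, and a collision when two or more do. Here I would invoke the construction of $US(l)$ and the id-assignment in Algorithm~\ref{alg:2}: the id's $id(u^i_1),\dots,id(u^i_{k_i})$ of the "representative" neighbors of $v_i$ are pairwise distinct inside the set $ID(v_i)$, and the key combinatorial property of the upper set — that $N'(v)\ne\emptyset$ for each $v\in US(l)$ and $\bigcup N(w)=V(l+1)$ — is what guarantees $z'$ can reconstruct $d=|N'(z')|$ from the bits $b_m$ it collects (the $m$-th bit of the binary representation of $|N'(v_i)|$, read off in increasing order of id positions). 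Symmetrically, in the sub-window $[t'_2(i)+(j-1)\tau_i+\lfloor\log\Delta\rfloor+2,\ t'_2(i)+j\tau_i-1]$, I would argue that nodes transmit their weight-transmission tags in rounds whose offset encodes $(W(z)-1)(\lfloor\log\Delta\rfloor+1)+e'$, so that for each weight value $f$ with $1\le f\le x_i$ the node $z'$ recovers the count $d_f = |\{u\in N'(z'): W(u)=f\}|$ — provided no extraneous node (a neighbor of $z'$ not in $N'(z')$) has transmitted, which would corrupt a count and is detected by the consistency check $\sum_f d_f \stackrel{?}{=} d$.

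Next I would analyze the stop-message / status mechanism to show two things: (a) once a node $v\in US(h-i)$ legitimately computes $W(v)$ and sends its stop message in round $t'_2(i)+j\tau_i$, every node of $N'(v)$ becomes $complete$ (by hearing the message or a collision in that round) and stops transmitting forever in this phase — this is exactly what prevents a neighbor of such a $u\in N'(v)$ in $US(h-i)$ from double-counting $W(u)$; and (b) conversely, a node $z'$ that sees a spurious collision inside its period, or whose consistency check fails, reverts to $incomplete$ and retries in the next period, and after enough periods (bounded using $|V(h-i)|\le$ the network size, itself at most what the periodic schedule allows) the "danger" configurations are cleared in the order dictated by the recursive construction of $US(h-i)$, so that eventually every $z'\in US(h-i)$ reaches status $complete$ with the correct $d$ and correct $d_f$'s, whence $W(z')=1+\sum_f f d_f = 1+\sum_{u\in N'(z')}W(u)$ by the induction hypothesis and the definition of weight. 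Nodes of level $h-i$ with $M(4)=0$ set their weight to $1$ and do not participate; this is correct because for such a node $v$, $N'(v)=\emptyset$ by the definition of the upper set, so indeed $W(v)=1$. Finally, the node with $M(5)=1$ — the last node of $US(h-i)$, which by the termination condition of the construction computes its weight last — triggers $Wave(T)$, so every node learns $t_2(i+1)$ and the phase ends synchronously; combined with Lemma~1 (giving $\Delta$, $h$, and level numbers) and the fact that $x_i$ is learned from the opening wave of the phase, every node can indeed compute all the round numbers appearing in the schedule.

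The main obstacle I anticipate is step (b): rigorously bounding the number of periods needed and showing that the revert-to-$incomplete$ rule, applied simultaneously at all nodes of $US(h-i)$, actually converges rather than oscillating. The danger is a node $z'$ that hears a node $u\notin N'(z')$: such $u$ lies in $N'(v)$ for some $v$ that appears earlier in the recursive construction of $US(h-i)$ (by the first rule, via a common representative neighbor), and one must show that $v$ settles and sends its stop message, silencing $u$, before the process can stall — essentially an argument that "completion" propagates along the order in which $US(h-i)$ was built. Making this ordering argument airtight, and checking that the period length $\tau_i$ and the bound $t'_2(i)$ leave enough room for all $|US(h-i)|$ settlements, is where the real work lies; the decoding-of-bits arguments in the first two steps are routine once the schedule is unwound.
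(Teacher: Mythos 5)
Your decomposition matches the paper's: the paper splits the argument into Claim~1 (if a node $v_p\in US(h-i)$ becomes \emph{complete} in some period and stays complete, then the decoded $d$ and $d_f$ are correct and $W(v_p)=1+\sum_f f d_f$ is the true weight) and Claim~2 (every $v_j$ in $US(h-i)=\{v_1,\dots,v_k\}$ does become complete, and does so within the first $j$ periods), and then runs the induction on the phase number exactly as you outline. Your first two steps --- decoding $d=|N'(z')|$ from the collision tags, decoding the counts $d_f$ from the weight-transmission tags, and using the consistency check $\sum_f d_f = d$ to detect transmissions from nodes outside $N'(z')$ --- reproduce the content of Claim~1 correctly.

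The genuine gap is precisely the one you flag at the end: you assert that ``completion propagates along the order in which $US(h-i)$ was built'' and that the danger configurations are eventually cleared, but you give no argument that the process terminates rather than oscillating, and no bound on the number of periods. The paper closes this with a short induction on the index $j$. Base case: $v_1$ is the first node chosen, so $N'(v_1)=N(v_1)$; hence in period $1$ every transmitting level-$(h-i+1)$ neighbor of $v_1$ lies in $N'(v_1)$, the positive ids there are pairwise distinct, no collision occurs, the check passes, and $v_1$ completes with $q_1=1$. Inductive step: if $v_1,\dots,v_j$ are complete by period $y=\max\{q_1,\dots,q_j\}\le j$, their stop messages have permanently silenced all of $N(v_1)\cup\dots\cup N(v_j)$, which contains $N(v_{j+1})\setminus N'(v_{j+1})$; so in period $y+1$ the node $v_{j+1}$ hears only nodes of $N'(v_{j+1})$, detects no spurious collision, its check passes, and $q_{j+1}\le y+1\le j+1$. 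This yields the explicit bound of at most $|US(h-i)|$ periods and rules out oscillation, since a node whose check succeeds never reverts to \emph{incomplete} and its $N'$ set stays silent forever after. Without this induction (or an equivalent potential argument), your step (b) is a claim, not a proof, and the lemma is not established. A minor additional point: your justification that nodes with $M(4)=0$ correctly set $W(v)=1$ is right, since for $v\notin US(l)$ the definition gives $N'(v)=N(v)\setminus\bigcup_{w\in US(l)}N(w)=\emptyset$ by the covering property of the upper set, but this deserved to be stated explicitly rather than attributed vaguely to ``the definition.''
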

\begin{proof}
We prove this lemma in two steps. First, we prove the following two claims, and then we prove the lemma by induction using these claims.

{\bf Claim 1:}  In the $i$-th phase of Procedure {\tt Size learning}, if the status of a node  $v_p\in US(h-i)$ is changed from $incomplete$ to $complete$ in the time interval $[t'_2(i)+(j-1)\tau_i,t'_2(i)+j\tau_i-1]$, for some integer $j\ge 1$, and remains $complete$ forever, then the node $v_p$ correctly computes $W(v)$ in round $t_2(i)+j\tau_i$, provided that all nodes of level $h-i+1$ know their weight at the beginning of the $i$-th phase.

In order to prove this claim, suppose that the status of $v_p\in US(h-i)$ is changed to $complete$ from $incomplete$ in the interval $[t'_2(i)+(j-1)\tau_i,t'_2(i)+j\tau_i-1]$, for some integer $j\ge 1$. Since the status of $v_p$ is $complete$ in round $t_2(i)+j\tau_i$, the node $v_p$ did not detect any collision in the above time interval.
Suppose that $v_p$ received messages only from nodes in $N'(v_p,US(h-i))$. In round $t'_2(i)+(j-1)\tau_i+\lfloor\log\Delta\rfloor+1$, the node computes the integer $d$ from the collision tags of the nodes from which it received messages in the time interval $[t'_2(i)+(j-1)\tau_i, t'_2(i)+(j-1)\tau_i+\lfloor\log\Delta\rfloor+1]$. According to the labeling scheme, the bits in the collision tags of the nodes in $N'(v_p,US(h-i))$ were assigned in such a way that the string $s'$ formed by these bits is the binary representation of the integer $|N'(v_p,US(h-i))|$. After that, the nodes in $N'(v_p,US(h-i))$ whose weight-transmission tag contains a positive integer as the id, transmit their tags to $v_p$ one by one. Let $X_f \subseteq N'(v_p,US(h-i))$ be the the set of nodes in $N'(v_p,US(h-i))$ with weight $f$, for $1\le f\le x_i$. The weight-transmission tags are given to $(\lfloor\log |X_f|\rfloor+1)$ nodes in $X_f$ in such a way that the binary string formed by the bits of the weight transmission tags of these nodes in the increasing order of their ids is the binary representation of the integer $|X_f|$. Hence, for $1\le f\le x_i$, $\sum_f |X_f|=d$, as the sum of the numbers of nodes in $N'(v_p,US(h-i))$ with different weights is equal to the total number of nodes in $N'(v_p,US(h-i))$. If the node $v_p$ received messages only from the nodes in $N'(v_p,US(h-i))$, it learns $\sum_f |X_f|=d$, and hence correctly computes $W(v_p)=1+\sum_f f|X_f|$.

Otherwise,  there exists a node $u \in N(v_p)\cap N'(v_q,US(h-i))$, for some node $v_q \in US(h-i)$ with $q<p$, such that the id in the weight-transmission tag of $u$ is non-zero. Then the integer $\sum_f |X_f|$ that $v_p$ computes  cannot be equal to the integer $d$, as explained above, and the node
$v_p$ changes its status back to $incomplete$. This is a contradiction. Therefore, the node $v_p$ correctly  computes its weight at the end of round $t'_2(i)+j\tau_i-1$,
which proves the claim.

\noindent
{\bf Claim 2:} Let $US(h-i)=\{v_1,v_2,\cdots,v_k\}$.  In the $i$-th phase of Procedure {\tt Size Learning}, each node $v_j$ changes its status from $incomplete$ to $complete$
during the time interval $[t'_2(i)+(q_j-1)\tau_i+1,t'_2(i)+q_j\tau_i]$, for some $q_j \leq j$, and remains $complete$ forever.

 We prove this claim by induction on $j$. As the base case, we prove that in the time interval $[t'_2(i)+1,t'_2(i)+\tau_i-1]$, the status of the node $v_1 \in US(h-i)$ is changed from $incomplete$ to $complete$. According to the labeling scheme and to the construction of the set $US(h-i)$, $(\lfloor \log |N'(v_1,US(h-i))|\rfloor+1)$ nodes from $N'(v_1,US(h-i))$ have distinct positive ids in their collision tags, and all other nodes from $N'(v_1,US(h-i))$ have the id 0. Hence, the node $v_1$ detects no collision in the time interval  $[t'_2(i)+1,t'_2(i)+(\lfloor\log \Delta\rfloor+1)]$, and it changes its status to $complete$.  In the next $x_i(\lfloor\log \Delta \rfloor+1)$ rounds, the nodes of level $h-i
 +1$, with positive ids in their weight-transmission tags, transmit. Since the ids in the weight-transmission tags of $(\lfloor \log |N'(v_1,US(h-i))|\rfloor+1)$ nodes are distinct positive integers, and $N(v_1)=N'(v_1,US(h-i))$, the node $v_1$ does not detect any collision. Also, since the node $v_1$ received messages only from nodes in $N'(v_1,US(h-i))$,
 therefore $\sum_f |X_f|=d$, for $1\le f\le x_i$, and hence $v_1$ remains $complete$ forever.

Suppose by induction that Claim 2 holds for nodes $v_1,\dots ,v_j$.  Let $y=\max\{q_1,q_2,\dots, q_j\}$. Consider the following two cases.

Case 1: There exists an integer $q_{j+1} \leq y$, such that the node $v_{j+1}$ changes its status from $incomplete$ to $complete$
during the time interval $[t'_2(i)+(q_{j+1}-1)\tau_i+1,t'_2(i)+q_{j+1}\tau_i]$, for some $q_{j+1} \leq y$, and remains $complete$ forever.

In this case the claim holds for $v_{j+1}$ because $q_{j+1} \leq y \leq j+1$.

Case 2: Case 1 does not hold.

Therefore, the status of $v_{j+1}$ is $incomplete$ in round  $t'_2(i)+y\tau_i$.
The status of all the nodes in $N'(v_{j+1},US(h-i))$ is $incomplete$ in this round as well, as they did not received any stop message from $v_{j+1}$ or detected any collision in round $t'_2(i)+y\tau_i$.

 The status of the nodes in $N(v_{j+1})\setminus N'(v_{j+1},US(h-i))$ is $complete$, as $N(v_{j+1})\setminus N'(v_{j+1},US(h-i)) \subseteq \cup_{i=1}^{j} N(v_i)$ and the nodes $v_1,v_2,\cdots,v_j$ are $complete$. Consider the time interval $[t'_2(i)+y\tau_i+1,t'_2(i)+(y+1)\tau_i-1]$.
In this time interval, the node $v_{j+1}$ receives messages only from the nodes in $N'(v_{j+1},US(h-i))$. Since the positive ids in the collision tags and the positive ids in the weight-transmission tags are unique for the nodes in $N'(v_{j+1},US(h-i))$, the node $v_{j+1}$ does not detect any collision in the interval
$[t'_2(i)+y\tau_i+1,t'_2(i)+(y+1)\tau_i-1]$. Also, since the node $v_{j+1}$ received messages only from nodes in $N'(v_{j+1},US(h-i))$,
 therefore $\sum_{f=1}^{x_i} d_f=d$, and hence $v_{j+1}$ remains $complete$ forever. Since $y \le j$, we have $y+1\le j+1$.
Therefore, the proof of the claim follows by induction.

%

Now we prove the lemma by induction on the phase number. According to the definition of the weight of a node, all the nodes in level $h$ have weight 1. Therefore, by Claim 2, at the end of round $t_2(1)+j\tau_1$, the node $v_j$ in $US(h-1)$ becomes $complete$, and hence by Claim 1, it correctly computes its weight, since all the nodes in level $h$ already know their weight which is 1. This implies that all the nodes in level $h-1$ correctly compute their weights at the end of phase 1. Suppose that for $i\ge 1$, all the nodes in level $h-i$ correctly compute their weights at the end of phase $i$. Then by Claim 2, all the nodes in $US(h-i-1)$ become $complete$ in the ($i+1$)-th phase, and hence by Claim 1 they correctly compute their weights in this phase. Therefore, the lemma follows by induction.
\end{proof}

Applying Lemma \ref{main} for $i=h$, we get the following corollary.
\begin{corollary}\label{cor}
Upon completion of Procedure {\tt Size Learning}, the node $r$ correctly computes the size of the graph.
\end{corollary}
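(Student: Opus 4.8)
The plan is to specialize Lemma \ref{main} to the last phase. First I would observe that Procedure {\tt Size Learning} consists of exactly $h$ phases and that the $h$-th phase is its final action: its termination round is computed by every node from the last $Wave(T)$ call described in the procedure, after which $r$ sets $n=W(r)$ and the procedure ends. Hence ``upon completion of Procedure {\tt Size Learning}'' is the same as ``at the end of the $h$-th phase''.

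Next I would apply Lemma \ref{main} with $i=h$. It asserts that at the end of the $h$-th phase every node in level $h-h=0$ correctly computes its weight. By definition of the level decomposition of $G$ with respect to the root $r$, level $0$ is precisely the singleton $\{r\}$; therefore $r$ correctly computes $W(r)$ at the end of the $h$-th phase, and then sets $n=W(r)$ as prescribed by the procedure.

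Finally I would invoke the weight identity established just after the definition of the weights: for every level $l$, the sum of the weights of the nodes at level $l$ equals the total number of nodes lying in levels $l'\ge l$. Taking $l=0$ gives $W(r)=\sum_{v\in V(0)}W(v)$, which is the total number of nodes of $G$, i.e., the size of the network. Combining this with the previous step, the integer $n$ computed by $r$ is exactly the size of the graph, which is the assertion of the corollary.

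I expect no real obstacle here: all the substantive work --- the correctness of each phase and the fact that every node can compute the round numbers delimiting the phases --- is already carried by Lemma \ref{main} and the two claims in its proof. The only points left to check are the two bookkeeping observations that $V(0)=\{r\}$ and that the $h$-th phase is the last one, both of which are immediate from the definitions.
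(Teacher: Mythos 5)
Your proposal is correct and matches the paper exactly: the paper derives this corollary simply by applying Lemma \ref{main} with $i=h$, relying on the facts (already noted after the definition of weights) that level $0$ consists of $r$ alone and that $W(r)$ equals the size of the network. Your write-up just makes these bookkeeping steps explicit.
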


Now we are ready to formulate our main positive result.

\begin{theorem}
The length of the labeling scheme used by Algorithm {\tt Size Discovery} on a graph of maximum degree $\Delta$ is $O(\log\log \Delta)$.
Upon completion of this algorithm, all nodes correctly output the size of the graph.
\end{theorem}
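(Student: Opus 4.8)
The plan is to establish the two assertions separately; both amount to assembling facts already in hand. For the length bound, I would walk through the labeling scheme $\Lambda$ component by component. The first part of a label is the marker vector $M$, a binary string of the fixed length $7$. The second part is the triple of tags $[L_1(v),L_2(v),L_3(v)]$, so it suffices to bound a single tag by $O(\log\log\Delta)$. Every tag is a pair $(id,b)$ in which $b$ is one bit and $id$ is the binary representation of a positive integer. For the $\Delta$-learning tag this integer is either an index $i\in\{1,\dots,\lfloor\log\Delta\rfloor+1\}$ or $\lfloor\log\Delta\rfloor+1$ itself; for the collision tag it is an id drawn from $\{1,\dots,\lfloor\log\Delta\rfloor+1\}$; and for the weight-transmission tag it is an index $i\le\lfloor\log|Q_i(x)|\rfloor+1\le\lfloor\log\Delta\rfloor+1$, since $|Q_i(x)|\le\Delta$. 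In every case the stored integer is at most $\lfloor\log\Delta\rfloor+1$, so $id$ has length $O(\log\log\Delta)$, hence so does each tag and the whole second part. By the remark in the Preliminaries that a structured sequence can be coded as a binary string whose length is a constant multiple of the total length of its components, $\Lambda(v)$ has length $O(\log\log\Delta)$.

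For correctness, the substantive work is already done by the two lemmas. The lemma on Procedure {\tt Parameter Learning} gives that, after that procedure, every node knows $\Delta$, its own level number, and $h$, and every node can compute the terminating round $t_1$; in particular all nodes enter Procedure {\tt Size Learning} synchronously. Lemma~\ref{main} gives that at the end of the $i$-th phase of Procedure {\tt Size Learning} every node of level $h-i$ holds the correct value of its weight; specialising to $i=h$, Corollary~\ref{cor} says that $r$ obtains $W(r)$, which by the definition of weight equals the number of nodes of $G$. What remains is Procedure {\tt Final}: $r$ initiates $Wave(n)$ with $n=W(r)$, and by the correctness of $Wave$ stated in the Preliminaries every node receives and decodes $n$ and outputs it. Hence all nodes output the size of the graph.

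Beyond collecting these facts, the one point I would make explicit — though it is not deep and is essentially argued inside the procedure descriptions — is the synchronization: every phase boundary ($t_1$, $t_2(i)$, $t'_2(i)$, $t_2(i+1)$) is either announced through a $Wave$ call (of $\Delta$, of $h$, of the maximal weight $x_i$ of a node in level $h-i+1$, or of the current round number $T$ at the end of a phase) or computed from quantities already known to every node, so all nodes agree on when each procedure and each phase starts and ends, even though no node knows $\Delta$, $h$, or the weights at the outset. I do not anticipate a genuine obstacle here: the real difficulties — avoiding multiple accounting of weights, handled by the collision tags, and showing that every node of $US(h-i)$ eventually reaches and keeps status $complete$, handled by Claims~1 and~2 inside Lemma~\ref{main} — are already settled. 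The only care needed is in the label-length count, namely checking that no tag ever stores an integer as large as $\Delta$ (which would push the bound up to $\Theta(\log\Delta)$); since every stored integer is at most $\lfloor\log\Delta\rfloor+1$, its binary representation has length $O(\log\log\Delta)$ and the bound follows.
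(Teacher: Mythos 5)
Your proposal is correct and follows essentially the same route as the paper: the length bound is obtained by observing that every tag stores an integer of size at most $\lfloor\log\Delta\rfloor+1$ (so each tag has length $O(\log\log\Delta)$) together with a constant-length marker vector, and correctness is reduced to Corollary~\ref{cor} plus the final $Wave(n)$. Your additional checks on the individual tags and on synchronization are just a more explicit spelling-out of what the paper's two-paragraph proof asserts.
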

\begin{proof}
According to the labeling scheme $\Lambda$, the label of every node has two parts. The first part is a vector $M$ of constant length and each term of $M$ is one bit. The second part is a vector $L$ containing three tags, each of which is of length $O(\log \log \Delta)$. Therefore, the length of the labeling scheme $\Lambda$ is $O(\log\log \Delta)$.

By Corollary \ref{cor}, node $r$ correctly computes the size $n$ of the network upon completion of Procedure {\tt Size Learning}. In Procedure {\tt Final}, node $r$ initiates
$Wave(n)$, and hence every node correctly computes $n$ upon completion of the algorithm.
\end{proof}

\subsection{The lower bound}
%
%
%
%
%
%

In this section, we show that the length of the labeling scheme used by Algorithm {\tt Size Discovery} is optimal, up to multiplicative constants. We prove the matching lower bound by showing that for some class of graphs of maximum degree $\Delta$ (indeed of trees), any size discovery algorithm must use a labeling scheme of length at least $\Omega(\log \log\Delta)$ on some graph of this class.

Let $S$ be a star with the central node $r$ of degree $\Delta$. Denote one of the leaves of $S$ by $a$.
For $\lfloor \frac{\Delta}{2}\rfloor\le i \le \Delta-1 $, we construct a tree $T_i$ by attaching $i$ leaves to $a$. The maximum degree of each tree $T_i$ is $\Delta$. Let $\cT$ be the set of trees $T_i$, for  $\lfloor \frac{\Delta}{2}\rfloor\le i \le \Delta-1 $, cf. Fig. \ref{fig:fig2}. Hence the size of $\cT$ is at least $\frac{\Delta}{2}$.

\begin{figure}[h]
\centering
\includegraphics[width=0.5\textwidth]{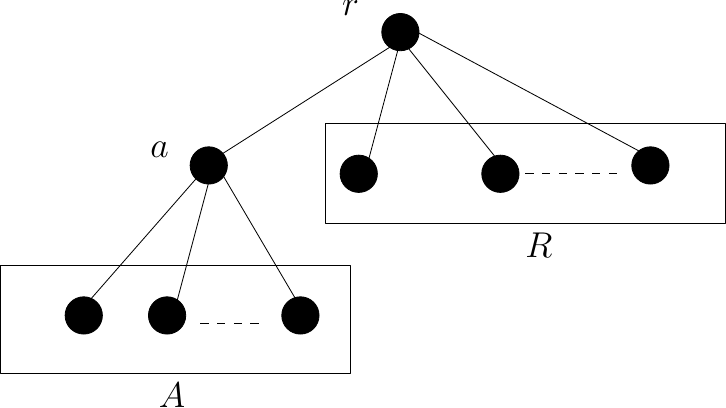}
\caption{Example of a tree in $\cal T$ }
\label{fig:fig2}
\end{figure}

The class $\cT$ of trees was used in \cite{GP} to prove an analogous lower bound for the problem of topology recognition (which, for the class $\cT$, is equivalent to size discovery). However, it should be stressed that the proof of the lower bound in our present scenario is much more involved because we work under the more powerful model assuming the capability of collision detection, while \cite{GP} assumed no collision detection. The negative result under our more powerful model is more difficult to obtain because of potential possibility of acquiring information by nodes from hearing collisions.
More precisely, our negative argument is based on the fact that in a deterministic algorithm nodes with the same history (see the formal definition below) must behave identically. In the model with collision detection, histories are more complicated because they are composed not only of messages heard by nodes in previous rounds but also of collisions heard by them.

Let $R$ be the set of leaves attached to $r$ and let $A$ be the set of leaves attached to $a$. For a tree $T\in \cT$, consider a labeling scheme $L(T)$ of length $\beta$, and let $\cA$ be an algorithm that finds the size of every tree $T \in \cT$, using $L(T)$. Let  $L(T)$  assign the label $l(v)$ to each node $v$ in $T$.

Let $T \in \cT$ be any tree. We define the notion of history (a similar notion was defined in \cite{Pe} for anonymous radio networks without collision detection) for each node $v$ in $T$ in round $t$. The history of a node in time $t$ is denoted by $H(v,t,L,\cA)$. This is the information that node $v$ acquires by round $t$, using the algorithm $\cA$. The action of a node  $v$ in round $t+1$ is a function of the history $H(v,t,L,\cA)$, hence for every round $t$, if two nodes have the same history in round $t$, then they behave identically in round $t+1$.
As in \cite{Pe}, we assume without loss of generality, that whenever a node transmits a message in round $t+1$, it sends its entire history in round $t$.
We define the history by induction on the round number as follows. $H(v,0,L,\cA)=l(v)$, for each node $v$ in $T$. For $t\ge 0$, the history in time $t+1$ is defined as follows, using the histories of the nodes in $T$ in time $t$.

\begin{itemize}
\item If $v$ receives a message from a node $u$ in round $t+1$, i.e., $v$ is silent in this round, and $u$ is its only neighbor that transmits in this round, then $H(v,t+1,L,\cA)=[H(v,t,L,\cA),H(u,t,L,\cA)]$.
\item If $v$ detects a collision in round $t+1$, i.e., $v$ is silent in this round, and there are at least two neighbors of $v$ that transmit in this round, then  $H(v,t+1,L,\cA)=[H(v,t,L,\cA),*]$.
\item Otherwise, $H(v,t+1,L,\cA)=[H(v,t,L,\cA),\lambda]$.
\end{itemize}

Hence, histories are nested sequences of labels and of symbols $\lambda$, and $*$, where, intuitively, $\lambda$ stands for silence in a given round, and $*$ stands for a collision.

The following lemma shows that histories of nodes  in sets $A$ and $R$ are equal iff the labels of these nodes are the same.

\begin{lemma}\label{lem:lem1}
For any tree $T \in \cT$ consider a labeling scheme $L(T)$. Let $\cA$ be any algorithm that finds the size of every tree $T \in \cT$ using the scheme $L(T)$. Then for any $t\ge 0$, we have:

\begin{enumerate}
\item For $v_1,v_2 \in R$, $H(v_1,t,L,\cA)=H(v_2,t,L,\cA)$, if and only if $l(v_1)=l(v_2)$.
\item For $v_1,v_2 \in A$, $H(v_1,t,L,\cA)=H(v_2,t,L,\cA)$, if and only if $l(v_1)=l(v_2)$.
\end{enumerate}
\end{lemma}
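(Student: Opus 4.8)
The plan is to prove both statements simultaneously by induction on the round number $t$, together with an auxiliary invariant about the history of the central node $r$ and of the leaf $a$. The ``if'' direction is easy: if $l(v_1)=l(v_2)$, then since the action of a node is a deterministic function of its history and $H(v,0,L,\cA)=l(v)$, a straightforward induction shows that two nodes with equal histories in round $t$ take the same action in round $t+1$ and, being symmetrically placed in the tree (both leaves of $r$, or both leaves of $a$), receive the same kind of feedback (message from the same neighbor, collision, or silence); hence their histories stay equal. The substance is the ``only if'' direction, which I would also fold into the induction: I would prove by induction on $t$ that for $v_1,v_2\in R$ (resp.\ $A$), $H(v_1,t,L,\cA)=H(v_2,t,L,\cA)$ \emph{implies} $l(v_1)=l(v_2)$, and that moreover this already follows from the base case since $H(v,0,L,\cA)=l(v)$ is literally the label. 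So the real content is that the histories can never \emph{merge}: if $l(v_1)\neq l(v_2)$ then $H(v_1,t,L,\cA)\neq H(v_2,t,L,\cA)$ for all $t$.

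To show histories never merge, I would strengthen the induction hypothesis to a conjunction of statements maintained for all $t\ge 0$:
\begin{enumerate}
\item for $v_1,v_2\in R$ with $l(v_1)\neq l(v_2)$, $H(v_1,t,L,\cA)\neq H(v_2,t,L,\cA)$, and likewise for $A$;
\item all nodes in $R$ that share a label have equal histories at round $t$, and similarly for $A$ (this is just the ``if'' direction, needed to describe $r$'s and $a$'s view precisely);
\item the history $H(r,t,L,\cA)$ and $H(a,t,L,\cA)$ depend only on the \emph{multiset} of labels in $R$, the multiset of labels in $A$, the labels $l(r),l(a)$, and $t$ — crucially \emph{not} on the number $i=|A|$ beyond what is forced by those multisets, \emph{as long as} $i$ and $i'$ are two admissible sizes producing the same label multiset structure.
\end{enumerate}
The inductive step is a case analysis on what happens in round $t+1$ from the viewpoint of a fixed leaf $v\in R$ (the $A$ case is symmetric, with $a$ in place of $r$). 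Node $v$ hears something only through $r$. If exactly one neighbor of $r$ transmits and it is $v$ itself, $v$ is silent and appends $\lambda$; if $v$ is silent and exactly one other leaf $u$ transmits, $v$ appends $H(u,t,L,\cA)$; if two or more neighbors of $r$ transmit and $v$ is silent, $v$ appends $*$; if $v$ transmits, $v$ is silent as a receiver and appends $\lambda$ (a transmitter hears nothing). In every one of these cases, the symbol appended to $H(v_1,t)$ and to $H(v_2,t)$ is determined by data that the induction hypothesis says is identical for $v_1$ and $v_2$ whenever... wait — not quite: the appended symbol for $v_1$ versus $v_2$ can differ precisely when one of them transmits and the other does not. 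That is the mechanism by which labels could in principle be distinguished, and that is fine — we only need that if the \emph{resulting} histories are equal then the labels were equal. So the case analysis should instead argue the contrapositive: assume $H(v_1,t+1)=H(v_2,t+1)$; then in particular $H(v_1,t)=H(v_2,t)$ (they are prefixes), so by IH $l(v_1)=l(v_2)$, done. Thus the ``only if'' direction is essentially immediate from the fact that histories are nested and the round-$0$ history is the label; the induction is needed only to make the ``if'' direction and to set up the later use of the lemma (the size-dependence invariant (3)).

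Let me restate the clean version I would write. First, observe that $H(v,t,L,\cA)$ always contains $H(v,0,L,\cA)=l(v)$ as its innermost entry, so $H(v_1,t,L,\cA)=H(v_2,t,L,\cA)$ forces $l(v_1)=l(v_2)$ for \emph{any} two nodes, which gives both ``only if'' directions at once with no induction at all. Then I would prove the ``if'' directions by induction on $t$: the base case is $H(v,0,L,\cA)=l(v)$; for the step, suppose $v_1,v_2\in R$ have $l(v_1)=l(v_2)$ and, by IH, equal histories at round $t$. By invariant (2) applied to all of $R$ (proved in the same induction), $r$ receives in round $t+1$ either a message, a collision, or silence, and this feedback is symmetric under the label-preserving automorphism of $T$ swapping $v_1$ and $v_2$ and fixing everything else; hence whatever happens to $v_1$ in round $t+1$ (it transmits or not according to its history, which equals $v_2$'s; it hears a message from some $u\notin\{v_1,v_2\}$, or $*$, or $\lambda$) happens identically to $v_2$, with the one caveat that if the common action is ``transmit'' then both append $\lambda$, and if it is ``stay silent'' then both append the same thing since the set of \emph{other} transmitting neighbors of $r$ is the same for both. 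Therefore $H(v_1,t+1,L,\cA)=H(v_2,t+1,L,\cA)$. The $A$ case is identical with $a$ replacing $r$. I expect the only mildly delicate point to be bookkeeping invariant (2) so that ``the set of other neighbors of $r$ that transmit'' is well-defined and label-determined at each round; this is where the nesting/automorphism argument has to be stated carefully, but it is routine. The genuinely new difficulty of the paper — that collisions can leak information — does not bite in \emph{this} lemma; it will surface in how the lemma is subsequently combined with the requirement that $\cA$ output $|T_i|=i+\Delta$ correctly for the $\Omega(\log\log\Delta)$ many distinct values of $i$, forcing many distinct labels.
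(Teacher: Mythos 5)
Your overall plan is the same as the paper's: the ``only if'' direction is immediate because histories are nested and the innermost entry of $H(v,t,L,\cA)$ is $l(v)$, and the ``if'' direction is an induction on $t$ showing that two equally-labelled leaves append the same symbol in each round. That part is fine, and your observation that the real content is only the ``if'' direction matches the paper (whose proof of the converse is a single sentence).

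However, your case analysis misstates the communication model in a way you should fix. A node $v_1\in R$ is a leaf of the star whose \emph{only} neighbour is $r$; the leaves of $r$ are pairwise non-adjacent. Reception at $v$ is governed by the transmitting neighbours of $v$, not of $r$. Hence a leaf in $R$ can never hear another leaf $u\in R$ (your clause ``if $v$ is silent and exactly one other leaf $u$ transmits, $v$ appends $H(u,t,L,\cA)$'' describes an impossible event), and it can never detect a collision (your clause ``if two or more neighbours of $r$ transmit and $v$ is silent, $v$ appends $*$'' is likewise impossible for a degree-one node). The only data determining what $v_1$ appends are: whether $v_1$ itself transmits (append $\lambda$) and, if it is silent, whether $r$ transmits (append $H(r,t,L,\cA)$) or not (append $\lambda$). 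This is exactly the paper's four-case analysis, and it makes the invariant you call (3) --- and the ``set of other transmitting neighbours of $r$'' bookkeeping, and the automorphism machinery --- unnecessary for this lemma; that material belongs to the next lemma, about the history of $r$ itself, where collisions at $r$ genuinely matter. Your conclusion survives only by accident, because the (incorrect) determining data you cite happens to be symmetric in $v_1$ and $v_2$; as written, the step does not correctly reflect the model and should be replaced by the simpler, correct case split on whether $r$ transmits and whether $v_1,v_2$ transmit.
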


\begin{proof}
We prove the first part of the lemma. The proof of the second part is similar.
By definition, for two nodes $v_1$ and $v_2$ with different labels, we have $H(v_1,t,L,\cA)\ne H(v_2,t,L,\cA)$ for all $t \ge 0$.

To prove the converse, we use induction on $t$. Let $v_1,v_2\in R$ such that $l(v_1)=l(v_2)$.
For $t=0$, $H(v_1,0,L,\cA)=l(v_1)=l(v_2)=H(v_2,0,L,\cA)$. Suppose that the statement is true for round $t$, i.e., $H(v_1,t,L,\cA)=H(v_2,t,L,\cA)$. Note that the history
of any node in $R$ in round $t+1$ does not depend on any action performed by the node $a$ or the nodes in $A$ in round $t+1$. Also, since the nodes $v_1$ and $v_2$ have the same histories in round $t$, they must behave identically in round $t+1$. Therefore, in round $t+1$, there can only be the following four cases.
\begin{enumerate}
\item[Case 1] The node $r$ transmits and the nodes $v_1,v_2$ do not transmit.

According to the definition of history, $H(v_1,t+1,L,\cA)=[H(v_1,t,L,\cA), H(r,t,L,\cA)]$ and $H(v_2,t+1,L,\cA)=[H(v_2,t,L,\cA), H(r,t,L,\cA)]$. This implies $H(v_1,t+1,L,\cA)=H(v_2,t+1,L,\cA)$.

 \item[Case 2] The nodes $v_1$ and $v_2$ transmit and the node $r$ does not transmit.

 According to the definition of history,
     $H(v_1,t+1,L,\cA)=[H(v_1,t,L,\cA),\lambda]$ and $H(v_2,t+1,L,\cA)=[H(v_2,t,L,\cA),\lambda]$. This implies $H(v_1,t+1,L,\cA)=H(v_2,t+1,L,\cA)$.
 \item[Case 3] The nodes $v_1$ and $v_2$, and the node $r$ transmit.

 According to the definition of history,
     $H(v_1,t+1,L,\cA)=[H(v_1,t,L,\cA),\lambda]$ and $H(v_2,t+1,L,\cA)=[H(v_2,t,L,\cA),\lambda]$. This implies $H(v_1,t+1,L,\cA)=H(v_2,t+1,L,\cA)$.
\item[Case 4] Neither $v_1$ nor $v_2$ nor $r$ transmit. In this case, $H(v_1,t+1,L,\cA)=[H(v_1,t,L,\cA),\lambda]$ and $H(v_2,t+1,L,\cA)=[H(v_2,t,L,\cA),\lambda]$. This implies $H(v_1,t+1,L,\cA)=H(v_2,t+1,L,\cA)$.
\end{enumerate}

Hence the proof of the lemma follows by induction.
\end{proof}

With the length of the labeling scheme $\beta$, there can be at  most $z=2^{\beta+1}$ possible different labels of at most this length. Let $\cL=\{l_1,l_2,\cdots,l_z\}$ be the set of distinct labels of length at most $\beta$. We define the {\it pattern} of a tree $T$ with the labeling scheme $L(T)$ as the pair $(P(r),P(a))$, where $P(r)$ and $P(a)$ are defined as follows.

$ P(r)=(l(r),b_1,b_2, \cdots, b_z)$, where $b_i\in \{0,1,2\}$ and:\\
 $b_i=0$, if no node in $R$ has label $l_i$;\\
  $b_i=1$, if there is exactly one node in $R$ with label $l_i$;\\
  $b_i=2$, if there are more than one node in $R$ with label $l_i$.

$ P(a)=(l(a),b'_1,b'_2, \cdots, b'_z)$, where $b'_i\in \{0,1,2\}$ and:\\
 $b'_i=0$, if no node in $A$ has label $l_i$;\\
 $b'_i=1$, if there is exactly one node in $A$ with label $l_i$;\\
 $b'_i=2$, if there are more than one node in $A$ with label $l_i$.

 The following lemma states that histories of the node $r$ in trees from $\cT$ depend only on the pattern and not on the tree itself.

\begin{lemma}\label{lem:lem2}
Let $\cA$ be any algorithm that solves the size discovery problem for all trees $T\in \cT$ using the labeling scheme $L(T)$.
 If trees $T_1$ and $T_2$ have the same pattern, then for any $t \ge 0$, the node $r$ in $T_1$ and the node $r$ in $T_2$ have the same history in round $t$.
\end{lemma}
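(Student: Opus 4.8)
The plan is to prove this by induction on the round number $t$, following the same strategy as Lemma~\ref{lem:lem1}, but now tracking the histories of \emph{all} nodes in both trees simultaneously rather than just $r$. The key structural observation is that a tree $T_i \in \cT$ consists of four kinds of nodes: the center $r$, the special leaf $a$, the leaves in $R$ attached to $r$, and the leaves in $A$ attached to $a$; and that the only nodes whose behaviour can differ between $T_1$ and $T_2$ are those in $A$ (since $|A|$ is the only thing that varies). So I would strengthen the induction hypothesis to the following claim: for all $t \ge 0$, (i) $H(r, t, L, \cA)$ is the same in $T_1$ and $T_2$; (ii) $H(a, t, L, \cA)$ is the same in $T_1$ and $T_2$; (iii) for each label $l_i$, a node in $R$ with label $l_i$ in $T_1$ has the same history as a node in $R$ with label $l_i$ in $T_2$, and similarly for $A$; and moreover two nodes in $R$ (resp.\ $A$) in the \emph{same} tree have equal histories iff they have equal labels (this last part is just Lemma~\ref{lem:lem1}, which I can quote).

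The inductive step proceeds by case analysis on what each node type does in round $t+1$, exactly as in Lemma~\ref{lem:lem1}. First I would argue that because the histories of $r$ agree across the two trees, $r$ performs the same action (transmit or listen) in both; likewise for $a$, and for each label class within $R$ and within $A$. The delicate point is what $r$ and $a$ \emph{hear}. For $r$: its neighbors are the nodes of $R$ and the node $a$. By the same-action property, in both trees the \emph{set of labels} of transmitting $R$-nodes is the same, and $a$ either transmits in both or neither. Whether $r$ hears a message, a collision, or silence depends only on: whether $a$ transmits, and whether zero, one, or more than one node of $R$ transmits --- and the ``more than one'' case is exactly where the pattern value $b_i = 2$ matters, since if a label class $l_i$ has $b_i = 2$ in both trees (which it does, same pattern) then a transmitting class $l_i$ contributes $\ge 2$ transmitters in both. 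Hence $r$'s received symbol ($\lambda$, $*$, or a history) is identical, and that received history (if any) is either $H(a, t)$ or a common $R$-node history, both equal across trees by induction. The same reasoning handles $a$, whose neighbors are $r$ and the nodes of $A$: here the pattern component $P(a)$ with its $b_i' \in \{0,1,2\}$ guarantees that ``at least two transmitters in $A$'' holds in $T_1$ iff it holds in $T_2$, so $a$ hears the same thing. For nodes in $R$: each hears only from $r$ (their unique neighbor), whose action and history agree, so their updated histories agree; similarly for $A$-nodes with their unique neighbor $a$.

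The main obstacle is the collision bookkeeping: I must be careful that the pattern, which records only the trichotomy $\{0,1,2\}$ of multiplicities, is enough. The reason it suffices is precisely that a collision is triggered by ``$\ge 2$ transmitting neighbors,'' so distinguishing ``exactly one'' from ``two or more'' is all that is ever needed at a listening node --- the exact count never enters a history. This is the place where collision detection could a priori leak the size (the concern flagged in the remark after the definition of $\cT$), and the argument neutralizes it by showing that whatever is heard --- message, silence, or collision --- is pattern-determined. A secondary subtlety is that $|A|$ itself differs between $T_1$ and $T_2$, so I cannot pair up $A$-nodes bijectively; instead I pair them up \emph{by label}, relying on part~(iii) of the strengthened hypothesis together with Lemma~\ref{lem:lem1}, and note that every label class occurring in $A$ occurs in both trees (same pattern) so the pairing is well-defined and exhaustive on both sides. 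Once the strengthened claim is established, part~(i) is exactly the statement of the lemma.
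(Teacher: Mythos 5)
Your proposal is correct and follows essentially the same route as the paper: a simultaneous induction on the four statements (histories of $r$, of $a$, of same-labelled $R$-nodes, and of same-labelled $A$-nodes agree across the two trees), with a case analysis on who transmits and the observation that the $\{0,1,2\}$ trichotomy in the pattern is exactly what is needed to decide whether a listening node hears silence, a single message, or a collision. The paper's proof is the same argument written out with the explicit case enumeration.
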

\begin{proof}
Let $T_1$ and $T_2$ be two trees with same pattern $(P(r),P(a))$. For $j=1,2$,
denote the node $r$ in $T_j$ by $r_j$, the node $a$ in $T_j$ by $a_j$, the set $R$ in $T_j$ by $R_j$, and the set $A$ in $T_j$ by $A_j$.
For any $t \ge 0$, we prove the following statements by simultaneous induction. (To prove the lemma, we need only the first of them).

\begin{enumerate}
\item $H(r_1,t,L,\cA)=H(r_2,t,L,\cA)$.
\item  $H(a_1,t,L,\cA)=H(a_2,t,L,\cA)$.
\item For a node $v_1$ in $R_1$ and a node $v_2$ in $R_2$ with same label, $H(v_1,t,L,\cA)=H(v_2,t,L,\cA)$.
\item  For a node $v_1$ in $A_1$ and a node $v_2$ in $A_2$ with the same label, $H(v_1,t,L,\cA)=H(v_2,t,L,\cA)$.
\end{enumerate}

Since the patterns of the two trees are the same, we have $l(r_1)=l(r_2)$, and $l(a_1)=l(a_2)$. Therefore, according to the definition of the history, the above statements are true for $t=0$.

Suppose that all the above statements are true for round $t$. Consider the execution of the algorithm in round $t+1$ as follows:

\noindent
{\bf Induction step for (1)}: The actions of  nodes in $A_1$ and $A_2$ in round $t+1$ do not affect the histories of the nodes $r_1$ and $r_2$ in this round.  Hence we have the following cases in round $t+1$.
\begin{enumerate}[(a)]
\item $r_1$ transmits, $a_1$ does not transmit, and no node in $R_1$ transmits.

According to the definition of history, $H(r_1,t+1,L,\cA)=[H(r_1,t,L,\cA),\lambda]$. Since $H(r_1,t,L,\cA)=H(r_2,t,L,\cA)$ and $r_1$ transmits in round $t+1$, then $r_2$ also transmits in round $t+1$. Similarly $a_2$ does not transmit in this round, since $a_1$ does not transmit. We prove that no node in $R_2$ transmits in round $t+1$. Suppose otherwise. Let $v_2$ be a node in $R_2$ that transmits in round $t+1$,
 and suppose that the label of $v_2$ is $l_i$. Therefore, in $P(r)$, either $b_i=1$, or $b_i=2$.

 If $b_i=1$, then $v_2$ is the unique node with label $l_i$ in $R_2$. Since the patterns of the two trees are the same, therefore, there exists a unique node $v_1$ in $R_1$ with label $l_i$. Since $H(v_1,t,L,\cA)=H(v_2,t,L,\cA)$, by the induction hypothesis for (3), therefore $v_1$ must transmit in round $t+1$, which is a contradiction with the fact that no node in $R_1$ transmits.
 A similar statement holds for $b_i=2$.
    Hence no node in $R_2$ transmits in round $t+1$. Therefore, $H(r_2,t+1,L,\cA)= [H(r_2,t,L,\cA),\lambda]$. This implies that $H(r_1,t+1,L,\cA)=H(r_2,t+1,L,\cA)$.

\item $r_1$ transmits, $a_1$ transmits, no node in $R_1$ transmits.

Since $H(r_1,t,L,\cA)=H(r_2,t,L,\cA)$ and $H(a_1,t,L,\cA)=H(a_2,t,L,\cA)$, and $r_1$ and $a_1$ transmit in round $t+1$, therefore, $r_2$ and $a_2$ also transmit in round $t+1$. Hence $H(r_1,t+1,L,\cA)=[H(r_1,t,L,\cA),\lambda]$ and $H(r_2,t+1,L,\cA)=[H(r_2,t,L,\cA),\lambda]$. This implies that $H(r_1,t+1,L,\cA)= H(r_2,t+1,L,\cA)$.

\item $r_1$ transmits, $a_1$ does not transmit, some nodes in $R_1$ transmit.

Similarly as in (b), we have $H(r_1,t+1,L,\cA)=[H(r_1,t,L,\cA),\lambda]$ and $H(r_2,t+1,L,\cA)=[H(r_2,t,L,\cA),\lambda]$. This implies that $H(r_1,t+1,L,\cA)= H(r_2,t+1,L,\cA)$.

\item $r_1$ transmits, $a_1$ transmits, some nodes in $R_1$ transmit.

Similarly as in (b), we have $H(r_1,t+1,L,\cA)=[H(r_1,t,L,\cA),\lambda]$ and $H(r_2,t+1,L,\cA)=[H(r_2,t,L,\cA),\lambda]$. This implies that $H(r_1,t+1,L,\cA)= H(r_2,t+1,L,\cA)$.

\item $r_1$  does not transmit, $a_1$ does not transmit, no node in $R_1$ transmits.

Since $a_1$ does not transmit in round $t+1$, therefore $a_2$ does not transmit in round $t+1$. Also, as explained in (a), no node in $R_2$  transmits in round $t+1$. Therefore, $H(r_1,t+1,L,\cA)= [H(r_1,t,L,\cA),\lambda]$ and $H(r_2,t+1,L,\cA)= [H(r_2,t,L,\cA),\lambda]$. This implies that $H(r_1,t+1,L,\cA)= H(r_2,t+1,L,\cA)$.

\item $r_1$  does not transmit, $a_1$ transmits, no node in $R_1$ transmits.

Since $a_1$ transmits in round $t+1$, therefore $a_2$ transmits in round $t+1$. Also, as explained in (a), no node in $R_2$ transmits in round $t+1$. Therefore, according to the definition of history, $H(r_1,t+1,L,\cA)=[H(r_1,t,L,\cA),[H(a_1,t,L,\cA)]$, and $H(r_2,t+1,L,\cA)=[H(r_2,t,L,\cA),[H(a_2,t,L,\cA)]$. Since   $H(r_1,t,L,\cA)=H(r_2,t,L,\cA)$ and $H(a_1,t,L,\cA)=H(a_2,t,L,\cA)$, therefore, $H(r_1,t+1,L,\cA)=H(r_2,t+1,L,\cA)$.

\item $r_1$  does not transmit, $a_1$ does not transmit, some nodes in $R_1$ transmit.

Let $v_1$ be a node in $R_1$ with label $l_i$, such that $v_1$ transmits in round $t+1$. Then by Lemma \ref{lem:lem1}, all the nodes in $R_1$ with label $l_i$ must transmit in round $t+1$. Suppose that the nodes with labels  $l_{i_1},l_{i_2}, \cdots,l_{i_k}$ transmit in round $t+1$.

If each of the integers $b_{i_1},b_{i_2}, \cdots,b_{i_k}$ is 0, then no node in $R_1$ transmits  which contradicts the assumption of case (g).

    If at least two of the integers $b_{i_1},b_{i_2}, \cdots,b_{i_k}$ are 1, or at least one of them is 2, then there exist at least two nodes in $R_1$ and at least two nodes in $R_2$ that transmit in round $t+1$. Hence, a collision is heard at the node $r_1$ and a collision is heard at the node $r_2$. Therefore,  $H(r_1,t+1,L,\cA)=[H(r_1,t,L,\cA),*]$ and $H(r_2,t+1,L,\cA)=[H(r_2,t,L,\cA),*]$. This implies that $H(r_1,t+1,L,\cA)= H(r_2,t+1,L,\cA)$.

    Otherwise, exactly one of the integers $b_{i_1},b_{i_2}, \cdots,b_{i_k}$ is 1, and all others are 0. W.l.o.g.
    let $b_{i_1}$ be the unique  integer 1. Then there is exactly one node $v_1$ with label $l_{i_1}$ in $R_1$ and there is exactly one node $v_2$ with label $l_{i_1}$ in $R_2$ which transmit in round $t+1$. Therefore,
    $H(r_1,t+1,L,\cA)=[H(r_1,t,L,\cA),H(v_1,t,L,\cA)]$ and $H(r_2,t+1,L,\cA)=[H(r_2,t,L,\cA),H(v_2,t,L,\cA)]$. Since,
     $H(v_1,t,L,\cA)=H(v_2,t,L,\cA)$ and $H(r_1,t,L,\cA)=H(r_2,t,L,\cA)$,  therefore, $H(r_1,t+1,L,\cA)= H(r_2,t+1,L,\cA)$.

\item $r_1$  does not transmit, $a_1$ transmits, some nodes  in $R_1$ transmit.

Since $a_1$ transmits in round $t+1$, therefore, $a_2$ transmits in round $t+1$. Also, since some node in $R_1$ transmits in round $t+1$, therefore some node in $R_2$ transmits in round $t+1$, as explained in (a). Therefore, a collision is heard at $r_1$, and a collision is heard at $r_2$. Hence, $H(r_1,t+1,L,\cA)=[H(r_1,t,L,\cA),*]$ and $H(r_2,t+1,L,\cA)=[H(r_2,t,L,\cA),*]$. This implies that $H(r_1,t+1,L,\cA)= H(r_2,t+1,L,\cA)$.

\end{enumerate}

\noindent
{\bf Induction step for (2)}: This is similar to the induction step for (1).

\noindent
{\bf Induction step for (3)}: For $j=1,2$, the histories of the nodes in $R_j$ in round $t+1$ do not depend on the action of the  node $a_j$ and the actions of the nodes in $A_j$, in this round. Hence we have the following cases in round $t+1$.
\begin{enumerate}[(i)]
\item The node $r_1$ transmits and no node in $R_1$ transmits.

This implies that the node $r_2$ transmits and no node in $R_2$ transmits. Therefore, $H(v_1,t+1,L,\cA)=[H(v_1,t,L,\cA),H(r_1,t,L,\cA)]$ and $H(v_2,t+1,L,\cA)=[H(v_2,t,L,\cA),H(r_2,t,L,\cA)]$. Since,  $H(r_1,t,L,\cA)=H(r_2,t,L,\cA)$ and $H(v_1,t,L,\cA)=H(v_2,t,L,\cA)$, therefore, $H(v_1,t+1,L,\cA)=H(v_2,t+1,L,\cA)$.

\item The node $r_1$ transmits and some nodes in $R_1$ transmit.

This implies that the node $r_2$ transmits and some nodes in $R_2$ transmit.
There are two cases.
      If $v_1$ transmits, then $v_2$ also transmits in round $t+1$. Hence  $H(v_1,t+1,L,\cA)=[H(v_1,t,L,\cA),\lambda]$ and $H(v_2,t+1,L,\cA)=[H(v_2,t,L,\cA),\lambda]$ and hence $H(v_1,t+1,L,\cA)=H(v_2,t+1,L,\cA)$.
      If $v_1$ does not transmit, then $v_2$ does not transmit either, in round $t+1$. Hence  $H(v_1,t+1,L,\cA)=[H(v_1,t,L,\cA),H(r_1,t,L,\cA)]$ and
      $H(v_2,t+1,L,\cA)=[H(v_2,t,L,\cA),H(r_2,t,L,\cA)]$ and hence $H(v_1,t+1,L,\cA)=H(v_2,t+1,L,\cA)$.

\item The node $r_1$ does not transmit and some nodes in $R_1$ transmit.

Since $r_1$ does not transmit therefore $r_2$ does not transmit. According to the definition of history, $H(v_1,t+1,L,\cA)=[H(v_1,t,L,\cA),\lambda]$ and $H(v_2,t+1,L,\cA)=[H(v_2,t,L,\cA),\lambda]$, and hence $H(v_1,t+1,L,\cA)=H(v_2,t+1,L,\cA)$.

\item The node $r_1$ does not transmit and no node in $R_1$ transmits.

In this case,  $H(v_1,t+1,L,\cA)=[H(v_1,t,L,\cA),\lambda]$ and $H(v_2,t+1,L,\cA)=[H(v_2,t,L,\cA),\lambda]$ and hence $H(v_1,t+1,L,\cA)=H(v_2,t+1,L,\cA)$.

\end{enumerate}

\noindent
{\bf Induction step for (4)}: This is similar to the induction step for (3).

Therefore, the lemma follows by induction.
\end{proof}

\begin{corollary}
Let $\cH_t$ be the set of all possible histories of the node $r$ in all trees in $\cT$, in round $t$, and let $\cP$ be the set of all possible patterns of trees in $\cT$. Then $|\cH_t| \le |\cP|$.
\label{cor:cor1}\end{corollary}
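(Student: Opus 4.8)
The plan is to give a short counting argument that is essentially immediate from Lemma~\ref{lem:lem2}. Define a map $\Phi_t$ from $\cH_t$ to $\cP$ as follows: by definition, every history $H \in \cH_t$ is the history of the node $r$ in round $t$ in some tree $T \in \cT$ executing $\cA$ with the labeling scheme $L(T)$; pick one such tree $T$ and set $\Phi_t(H)$ to be the pattern of $T$ (with that scheme). First I would check that $\Phi_t$ is well defined, but in fact well-definedness is not even needed for the bound --- it suffices to exhibit a surjection from $\cP$ onto $\cH_t$, or equivalently an injection from $\cH_t$ into $\cP$.

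The cleaner route is to argue directly that the history of $r$ in round $t$ is a function of the pattern alone. For each pattern $\pi \in \cP$, since $\cT$ is nonempty and the set of schemes $L(T)$ ranges over those producing patterns in $\cP$, there is at least one pair $(T, L(T))$ whose pattern is $\pi$; fix one and let $g(\pi)$ be the history of $r$ in $T$ in round $t$. Then Lemma~\ref{lem:lem2} says precisely that for \emph{any} tree $T' \in \cT$ whose (scheme-induced) pattern is $\pi$, the history of $r$ in $T'$ in round $t$ equals $g(\pi)$: the two trees $T$ and $T'$ have the same pattern, so by the lemma the node $r$ in each has the same history in round $t$. Hence every element of $\cH_t$ --- which by definition arises as the round-$t$ history of $r$ in some tree with some pattern $\pi \in \cP$ --- equals $g(\pi)$ for that $\pi$. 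Therefore $g \colon \cP \to \cH_t$ is surjective, which gives $|\cH_t| \le |\cP|$.

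I do not expect any real obstacle here; the corollary is a direct bookkeeping consequence of Lemma~\ref{lem:lem2}. The only point requiring a line of care is making explicit that each history in $\cH_t$ is tagged by a well-defined pattern, namely the pattern of the tree (and scheme) it came from, and that two different trees giving rise to the same history cause no trouble because we only need a surjection $\cP \twoheadrightarrow \cH_t$, not a bijection. If one prefers to phrase it as an injection $\cH_t \hookrightarrow \cP$, the same Lemma~\ref{lem:lem2} guarantees that the assignment $H \mapsto (\text{pattern of any tree realizing } H)$ is well defined, since any two trees realizing the same history $H$ of $r$ in round $t$ must --- were their patterns different --- still by the lemma have produced equal histories, which is consistent, but distinct patterns can of course still map to the same history, so it is the surjection formulation that yields the stated inequality. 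Either way, $|\cH_t| \le |\cP|$ follows.
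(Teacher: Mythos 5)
Your argument is correct and matches the paper's (implicit) reasoning: the corollary is stated in the paper as an immediate consequence of Lemma~\ref{lem:lem2}, and your surjection $g\colon \cP \to \cH_t$, well defined precisely because the lemma guarantees that all trees sharing a pattern yield the same round-$t$ history of $r$, is exactly the bookkeeping the authors leave to the reader. Your closing remark correctly identifies that the surjection formulation (rather than an injection $\cH_t \hookrightarrow \cP$, whose well-definedness the lemma does not give) is the one that delivers $|\cH_t| \le |\cP|$.
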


The following theorem gives the lower bound $\Omega(\log \log \Delta)$ on the length of a labeling scheme for size discovery, that matches the length of the labeling scheme used by
Algorithm {\tt Size Discovery}.

\begin{theorem}
For any tree $T \in \cT$ consider a labeling scheme L($T$). Let $\cA$ be any algorithm that finds the size of $T$, for every tree $T \in \cT$, using the scheme $L(T)$. Then there exists a tree $T' \in \cT$, for which the length of the scheme $L(T')$ is $\Omega(\log \log \Delta)$.
\end{theorem}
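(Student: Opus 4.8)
The plan is to combine Lemma~\ref{lem:lem2} with a counting/pigeonhole argument on patterns. Put $\beta=\max_{T\in\cT}|L(T)|$, the largest label length used by the scheme on any tree of $\cT$; proving $\beta=\Omega(\log\log\Delta)$ immediately yields a tree $T'$ as required. Two elementary facts drive the proof. First, the trees in $\cT$ have pairwise distinct sizes: $|T_i|=\Delta+1+i$, and $i$ ranges over $\lceil\Delta/2\rceil\ge\Delta/2$ consecutive integers. Second, since $\cA$ solves size discovery, in every tree $T_i$ (run with the scheme $L(T_i)$) the node $r$ must eventually output $|T_i|=\Delta+1+i$; moreover both the round in which $r$ produces its output and the value it outputs are functions of the sequence of histories $H(r,t,L(T_i),\cA)$, $t\ge 0$, because by definition the action of a node in round $t+1$ is determined by its history in round $t$.

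Next I would bound $|\cP|$. A pattern of a tree whose labels have length at most $\beta$ is a pair $(P(r),P(a))$, and each of $P(r)$ and $P(a)$ is a label from the set $\cL$ of the $z=2^{\beta+1}$ strings of length at most $\beta$, together with a vector in $\{0,1,2\}^{z}$; hence $|\cP|\le (z\cdot 3^{z})^2$. Consequently $\log|\cP|\le 2(\beta+1)+2\cdot 2^{\beta+1}\log 3=O(2^{\beta})$, so there is a constant $c>0$ such that if $\beta\le c\log\log\Delta$ then $|\cP|<\Delta/2$.

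Finally I would derive a contradiction. Assume $\beta\le c\log\log\Delta$. By the pattern count and the first fact, there are two distinct trees $T_i,T_j\in\cT$ (so $|T_i|\ne|T_j|$) with the same pattern. By Lemma~\ref{lem:lem2}, the node $r$ then has the same history in every round $t$ in $T_i$ and in $T_j$; by the second fact it therefore performs the same output action, in the same round, with the same value, in both trees, contradicting that it must output $|T_i|$ in $T_i$ and $|T_j|\ne|T_i|$ in $T_j$. Hence $\beta>c\log\log\Delta$, and any tree $T'\in\cT$ on which a label of length $\beta$ is used satisfies the statement. (Equivalently, one may invoke Corollary~\ref{cor:cor1}: the map from a tree in $\cT$ to the value output by $r$ factors through the pattern, so $r$ realises at least $\Delta/2$ distinct histories across $\cT$, forcing $|\cP|\ge\Delta/2$.)

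The argument is short given Lemmas~\ref{lem:lem1}--\ref{lem:lem2}; the points that need care are, first, that the history of $r$ controls not only its transmissions but also when and what it outputs --- this is exactly what lets ``same pattern'' collapse two trees of different size --- and, second, that $|\cP|$ depends on $\beta$ only, because the multiplicities of labels within $R$ and within $A$ are recorded in the pattern only on the coarse scale $\{0,1,2\}$. Thus the oracle cannot encode the exact value of $i$ in a pattern unless it has $\Omega(\log\Delta)$ distinct labels at its disposal, i.e.\ unless $\beta=\Omega(\log\log\Delta)$; making this quantitative is the only genuinely computational step.
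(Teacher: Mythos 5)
Your proposal is correct and follows essentially the same route as the paper: bound the number of patterns by $z^2 3^{2z}$ with $z=2^{\beta+1}$, invoke Lemma~\ref{lem:lem2} (equivalently Corollary~\ref{cor:cor1}) to see that the history of $r$ factors through the pattern, and apply the pigeonhole principle against the $\ge\Delta/2$ pairwise distinct sizes in $\cT$. The only cosmetic difference is that you keep a generic constant $c$ where the paper fixes the threshold $\frac{1}{4}\log\log\Delta$ and computes the inequality explicitly.
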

\begin{proof}
It is enough to prove the theorem for sufficiently large $\Delta$.
We prove the theorem by contradiction. Suppose that there exists an algorithm $\cA$ that solves the size discovery problem in the class $\cT$, in time $t$, with labels of length at most $\frac{1}{4}\log \log \Delta$. There are at most $z=2{(\log \Delta)}^{\frac{1}{4}}$ possible different labels of at most this length.
There are at most $z^2 3^{2z}$ different possible patterns for these $z$ labels. Therefore, by Corollary \ref{cor:cor1}, the total number of histories of the node $r$, in time $t$, over the entire class $\cT$,  is at most $z^2 3^{2z} < (2{(\log \Delta)}^{\frac{1}{4}})^2 3^{4{(\log \Delta)}^{\frac{1}{4}}}<\frac{\Delta}{2} \le |\cT|$, for sufficiently large~$\Delta$.

Therefore, by the pigeonhole principle, there exist two trees $T'$, $T''$ in $\cT$ such that the history of $r$ in $T'$ in time $t$ is the same as the history of $r$ in $T''$ in round $t$. This implies that the node $r$ in $T'$ and the node $r$ in $T''$ must behave identically in every round until round $t$, hence they must  output the same size. This contradicts the fact that the trees $T'$ and $T''$ have different sizes.
This completes the proof.
\end{proof}

\section{Finding the diameter of a network}

By contrast to the task of finding the size of a network, it turns out that finding the diameter of a network can be done with a much shorter labeling scheme: in fact, we will show that a scheme of constant length is sufficient. In our solution the labeling scheme has length 2.

Consider any graph $G$ and let $u$ and $v$ be two nodes at distance equal to the diameter $D$ of the graph. The labeling scheme is as follows: node $u$ has label (00), node $v$ has label (11), all other  nodes at distance $D$ from $u$ have label (10), and all nodes at positive distances  $<D$ from node $u$ have label (01). Algorithm {\tt Diameter Discovery} consists of two procedures: Procedure {\tt Find Diameter} initiated by the node with label (00), and $Wave(D)$ initiated by the node with label (11). Upon completion of {\tt Find Diameter}, the node with label (11) learns the correct value of $D$ which it subsequently broadcasts to all other nodes using $Wave(D)$.

We now describe Procedure {\tt Find Diameter}. It is a modification of the subroutine $Wave$ described in Section 2. Its aim is for all nodes at distance $i$ from the initiating node $u$ with label (00),
to learn $i$. The set of these nodes is called level $i$. As soon as the node $v$ with label (11) learns its distance from $u$, it knows that this distance is $D$ and then it initiates $Wave(D)$ to spread this knowledge to all other nodes.

 Procedure {\tt Find Diameter} works in phases $i=0,1,\dots, D-1$. At the beginning of phase $i$, all nodes at levels at most $i$ know their level number and all nodes at levels larger than $i$ do not know it.
Phase $i$ of Procedure {\tt Find Diameter} is identical to any phase of $Wave(i+1)$ with nodes at level $i$ playing the role of blue nodes, nodes at levels smaller than $i$ playing the role of white nodes, and nodes at levels larger than $i$ playing the role of red nodes, with the modification that nodes at level $i$ initiate it only
 if their label starts with bit 0. (Hence nodes with labels (11) and (10) which are exactly nodes at distance  $D$ from $u$ do not initiate phase $D$, which would be useless). Upon completion of phase $i$, all nodes at level $i+1$ learn the value of $i+1$.

 Hence at the end of phase $D-1$ all nodes know their level number. Node with label (11) that knows its level number $D$, knows that this is the diameter of the graph. It initiates $Wave(D)$. At this point, all nodes have finished all transmissions of Procedure
 {\tt Find Diameter} and hence there are no interferences with transmissions of $Wave(D)$, which starts with node $v$ colored blue and all other nodes colored red.

 Procedure {\tt Spread}  that follows Procedure {\tt Find Diameter} can be described as follows. After computing $D$, node $v$ with label (11) initiates $Wave(D)$. Every node in $G$ computes the value of $D$, and outputs it. The procedure ends after all nodes output $D$. Now our algorithm can be succinctly formulated as follows:

 \begin{algorithm}
\caption{\textsc{{\tt Diameter Discovery}}}
\begin{algorithmic}[1]
\STATE {\tt Find Diameter}
\STATE {\tt Spread Diameter}
\end{algorithmic}\label{alg:diam}
\end{algorithm}

The above explanations prove the following proposition.

\begin{proposition}
The length of the labeling scheme used by Algorithm {\tt Diameter Discovery} on any graph is 2.
Upon completion of this algorithm, all nodes correctly output the diameter of the graph.
\end{proposition}

%

 \section{Conclusion}

  We established the minimum length $\Theta(\log\log \Delta)$ of a labeling scheme permitting to find the size of arbitrary radio networks of maximum degree $\Delta$, with collision detection, and we designed a size discovery algorithm using a labeling scheme of this length. For the task of diameter discovery, we showed an algorithm using a labeling scheme of constant length.
  Our algorithms heavily use the collision detection capability, hence the first open question is whether our results hold in radio networks without collision detection. Secondly, in this paper we were concerned only with the feasibility of the size and diameter discovery tasks using short labels.
  The running time of  our size discovery algorithm is $O(Dn^2\log \Delta)$, for $n$-node networks of diameter $D$ and maximum  degree $\Delta$. We did not
  try to optimize this running time. A natural open question is: what is the fastest size discovery algorithm using a shortest possible labeling scheme, i.e., a scheme of length $\Theta(\log\log \Delta)$?
  On the other hand, the running time of our diameter discovery algorithm is $O(D\log D)$. It is also natural to ask what is the optimal time of a diameter discovery algorithm using a labeling scheme of constant length. Another direction of future research could be considering our problems in the context of dynamic networks and/or in the presence of faults.

%


\end{document}